\documentclass[letterpaper,11pt]{article}

 
\usepackage{amsmath}
\usepackage{amsfonts}
\usepackage{amssymb}
\usepackage{amsthm}
\usepackage{url,ifthen}
\usepackage{srcltx}
\usepackage{multirow}
\usepackage{boxedminipage}
\usepackage[margin=1.125in]{geometry}
\usepackage{nicefrac}
\usepackage{xspace}
\usepackage{graphicx}
\usepackage{xcolor}
\definecolor{DarkGreen}{rgb}{0.1,0.5,0.1}
\definecolor{DarkRed}{rgb}{0.5,0.1,0.1}
\definecolor{DarkBlue}{rgb}{0.1,0.1,0.5}
\usepackage{tikz}
\usetikzlibrary{shapes,arrows}

\usepackage[small]{caption}

\usepackage[pdftex]{hyperref}
\hypersetup{
    unicode=false,          
    pdftoolbar=true,        
    pdfmenubar=true,        
    pdffitwindow=false,      
    pdfnewwindow=true,      
    colorlinks=true,       
    linkcolor=DarkBlue,          
    citecolor=DarkGreen,        
    filecolor=DarkGreen,      
    urlcolor=DarkBlue,          
    %
    %
    pdftitle={},
    pdfauthor={},    
}

\def\draft{1} 

\def\submit{0} 

\ifnum\draft=1 
    \def\ShowAuthNotes{1}
\else
    \def\ShowAuthNotes{0}
\fi

\ifnum\submit=1
\newcommand{\forsubmit}[1]{#1}
\newcommand{\forreals}[1]{}
\else
\newcommand{\forreals}[1]{#1}
\newcommand{\forsubmit}[1]{}
\fi

\ifnum\ShowAuthNotes=1
\newcommand{\authnote}[2]{{ \footnotesize \bf{\color{DarkRed}[#1's Note:
{\color{DarkBlue}#2}]}}}
\else
\newcommand{\authnote}[2]{}
\fi

%
%

\newtheorem{theorem}{Theorem}[section]
\newtheorem{remark}[theorem]{Remark}
\newtheorem{lemma}[theorem]{Lemma}
\newtheorem{corollary}[theorem]{Corollary}

\newtheorem{claim}[theorem]{Claim}
\newtheorem{fact}[theorem]{Fact}

\theoremstyle{definition}
\newtheorem{definition}[theorem]{Definition}

\newtheorem{condition}[theorem]{Condition}

\newcommand{\sectionlabel}[1]{\label{sec:#1}}

%
%

\usepackage[T1]{fontenc}
\usepackage{kpfonts}
\usepackage{microtype}


\newcommand{\Esymb}{\mathbb{E}}
\newcommand{\Psymb}{\mathbb{P}}

\DeclareMathOperator*{\E}{\Esymb}

\DeclareMathOperator*{\ProbOp}{\Psymb}
\renewcommand{\Pr}{\ProbOp}


\newcommand{\mper}{\,.}
\newcommand{\mcom}{\,,}


\newcommand{\cL}{{\cal L}}

\newcommand{\defeq}{\stackrel{\small \mathrm{def}}{=}}
\renewcommand{\leq}{\leqslant}
\renewcommand{\le}{\leqslant}
\renewcommand{\geq}{\geqslant}
\renewcommand{\ge}{\geqslant}





\newcommand\rd{\,\partial }

\newcommand{\R}{\mathbb{R}}

\usepackage{bm}
\renewcommand{\vec}[1]{{\bm{#1}}}










\newcommand{\Set}[1]{\left\{#1\right\}}



\newcommand{\poly}{{\rm poly}}

\renewcommand{\epsilon}{\varepsilon}

\newcommand{\eps}{\epsilon}


\newcommand{\remove}[1]{}

%
%


\newcommand\Rn{\ensuremath{{\mathbb{R}^n}}}
\newcommand\Rm{\ensuremath{{\mathbb{R}^m}}}

\newcommand{\Id}{\mathrm{Id}}

\newcommand\tr{\mathrm{Tr}}
\newcommand\conv{\mathrm{conv}}

\usepackage{float}

\makeatletter
\floatstyle{ruled}
\newfloat{fragment}{H}{lop}
\floatname{fragment}{Algorithm}
\renewcommand{\floatc@ruled}[2]{\vspace{2pt}{\@fs@cfont #1.\:} #2 \par
 \vspace{1pt}}
\makeatother

\title{Algorithms and Hardness for Robust Subspace
Recovery}
\author{Moritz Hardt\thanks{IBM Research Almaden. Email: {\tt mhardt@us.ibm.com}}
\and Ankur Moitra\thanks{MIT. Email: {\tt moitra@mit.edu}}}

\begin{document}

\maketitle

\begin{abstract} 
We consider a fundamental problem in unsupervised learning called \emph{subspace recovery}:
given a collection of $m$ points in $\R^n$, if many but not necessarily all of
these points are contained in a $d$-dimensional subspace $T$ can we find it?
The points contained in $T$ are called {\em inliers} and the remaining points
are {\em outliers}. This problem has received considerable attention in
computer science and in statistics. Yet efficient algorithms from computer
science are not robust to {\em adversarial} outliers, and the estimators from
robust statistics are hard to compute in high dimensions. 

Are there algorithms for subspace recovery that are both robust to outliers
and efficient?  We give an algorithm that finds $T$ when it contains more than
a $\frac{d}{n}$ fraction of the points.  Hence, for say $d = n/2$ this
estimator is both easy to compute and well-behaved when there are a constant
fraction of outliers. We prove that it is Small Set Expansion hard to find $T$
when the fraction of errors is any larger, thus giving evidence that our
estimator is an {\em optimal} compromise between efficiency and robustness. 
As it turns out, this basic problem has a surprising number of connections to other areas 
including small set expansion, matroid theory and functional analysis that we make 
use of here.
\end{abstract}

\section{Introduction}

Unsupervised learning refers to the problem of trying to find hidden structure
in unlabeled data. A ubiquitous approach is to model
this hidden structure as a low-dimensional subspace that contains many of the
data points. This approach has found a range of applications  in areas such as 
feature selection, dimensionality reduction, spectral
clustering, topic modeling and statistical inference.
There are two important desiderata for an unsupervised learning algorithm,  
\emph{computational efficiency} and \emph{robustness}: computational efficiency refers to the goal
of giving provable
guarantees on the running time of the algorithm and robustness refers to the goal of giving
guarantees that the algorithm produces a useful output even if the
assumptions of the model do not hold exactly. Our focus in this paper is
on understanding whether or not these two goals can be met simultaneously. 

Individually, these goals can each be met.  For example, there are many known
fast algorithms to compute the singular value decomposition, and from this
decomposition it is straightforward to find a low-dimensional subspace that
contains {\em all} of the data if it exists.  There are also a number of
provably robust estimators for subspace recovery.  One famous example is
{\em least median of squares} estimator of \cite{Rouss}.  The
computational problem that underlies this estimator is to find a subspace that
minimizes the median Euclidean distance to the data points.  An adversary must
corrupt at least half of the data points in order to corrupt the output.  Many
more robust estimators have been developed for this specific problem (e.g.
least trimmed squares, $M$-estimators, the Theil-Sen estimator, reweighed
least squares) and for other inference problems by the robust statistics
community (see e.g. \cite{RoussL} and \cite{Huber}).

Unfortunately, the singular value decomposition is not robust to outliers.
Moreover, only modest improvements over brute-force search are known
to actually compute the least median of squares estimator in high dimensions
(\cite{EdSouv}).
Is there an estimator for subspace recovery that is both efficiently computable and robust to
outliers? This is an instance of a fundamental and largely unexplored
question:
\begin{center}
{\slshape ``Can we reconcile computational efficiency and robustness in unsupervised
learning?''}
\end{center}
Our focus here is on a challenging notion of robustness used in the robust statistics community:
an estimator is robust if an adversary can corrupt 
an $\alpha$ fraction of the data, and the output of the estimator is still
well-behaved. The fraction of data that an adversary is allowed to corrupt is called the
{\em breakdown point} (\cite{DH}). We remark that there has been interesting recent work
on finding a subspace that approximately minimizes the sum of $\ell_p$
distances (for $p > 2$) to the data points, see, \cite{DTV,GRSW}.
Unfortunately $\ell_p$-regression can be corrupted quite easily by an adversary. 

In general, the robust statistics community studies the
breakdown properties of particular estimators.
Here, our goal is not to study a particular estimator, but rather whether or
not there is {\em any} robust estimator for subspace recovery that is also
easy to compute. The following definition is central to our paper:

\begin{definition} 
An estimator $\mathcal{E}$ is an \emph{$\alpha$-robust estimator} for 
the $d$-dimensional subspace recovery problem in $\R^n$ 
if for any set of points in 
which a $1-\alpha$ fraction are contained in a $d$-dimensional linear 
subspace $T\subseteq\R^n$, the estimator returns $T$.  
\end{definition}

Here the breakdown point is $\alpha$. So the natural question is, for what
choices of the parameters $n, d$ and $\alpha$ is there such an estimator that
is also easy to compute? There are compelling reasons to choose robust
estimators over their classical counterparts, but so far their potential has
not been realized because there are no computationally efficient algorithms to
compute them.

\subsection{Complexity of Robust Subspace Recovery}
We assume that the points outside~$T$ are in general position, and that 
the points inside~$T$ are in general position with respect to $T.$\footnote{If we remove these conditions then when $dim(T) = n-1$ the problem is equivalent to trying to satisfy as many equations as possible in an overdetermined linear system. See \cite{GR}, \cite{KM} and references therein. However, these reductions produce instances which are quite far from ones we might expect to observe in real data, and one approach for circumventing these hardness results is to instead require the above condition which is satisfied almost surely in most natural probabilistic models and seems to make the problem computationally much easier.}
Recall that the dimension of $T$ is $d$. Throughout this paper, 
we will use $L$ to denote the points inside $T$ and we will call these
the {\em inliers}, and the remaining points {\em outliers}. 
Our first result is a simple randomized algorithm
that achieves a breakdown point of $\alpha = 1- \frac{d}{n}$. 
Our result relies on Condition~\ref{cond:general}: 
any set of $n$ points is linearly independent if and only if at most $d$ of the points are inliers.

\begin{theorem}\label{thm:find}
If a set of $m$ points in $\R^n$ has strictly more than $\frac{d}{n} m$ inliers and meets Condition~\ref{cond:general}, then there is a Las Vegas algorithm
whose output is the set $L$ of inliers, each iteration can be implemented in polynomial time and the expected number of iterations is $O(n^2 m)$. 
\end{theorem}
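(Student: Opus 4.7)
I propose the following Las Vegas algorithm. At each iteration, draw a uniform random subset $S$ of size $n$ without replacement from the $m$ data points and form the $n\times n$ matrix $M$ whose rows are the points in $S$. If $\ker M=\{0\}$ the iteration fails and we restart. Otherwise, let $W\subseteq S$ consist of the rows whose index appears in the support of some vector in a basis of $\ker M$, let $V$ be the span of $W$, and output $\{x:x\in V\}$.

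Correctness of any successful iteration rests on a structural lemma: if $S$ contains $k\ge d+1$ inliers, then $\ker M$ is supported precisely on the inlier coordinates of $S$. To prove this, let $U$ denote the span of the $n-k$ outlier rows in $S$. By Condition~\ref{cond:general}, any $n$-subset containing exactly $d$ inliers is linearly independent; in particular, the $n-k$ outliers in $S$ together with any $d$ fixed inliers can be extended (by adjoining $k-d$ further outliers from the dataset) to such an $n$-subset, so those $n-k+d$ points are linearly independent. This yields $\dim U=n-k$ and $\dim(U+T)=n-k+d$, which forces $U\cap T=\{0\}$. Now for any $v\in\ker M$, the identity $\sum_{i\notin I}v_ix_i=-\sum_{i\in I}v_ix_i$ puts the left side in $U\cap T=\{0\}$, and the linear independence of the outlier rows then forces $v$ to vanish on outlier coordinates. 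Conversely, because any $d$ inliers are linearly independent, any $d+1$ of the $k$ inliers in $S$ admit a linear dependency with all coefficients nonzero, so every inlier coordinate appears in the support of some basis vector of $\ker M$. Hence $W$ is exactly the inlier part of $S$; since $|W|\ge d+1>d$ and any $d$ inliers span $T$, we have $V=T$, and a membership check over all $m$ points recovers $L$.

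For the iteration count, let $X$ denote the number of inliers in a fresh sample. The hypothesis $|L|>dm/n$ together with the integrality of $n|L|-dm$ gives $n|L|-dm\ge 1$, so $\E X=n|L|/m\ge d+1/m$. Since $X$ is integer-valued and bounded above by $n$, a reverse Markov argument yields
\[
\prob{X\ge d+1}\;\ge\;\frac{\E X-d}{n-d}\;\ge\;\frac{1}{m(n-d)}\;\ge\;\frac{1}{mn}\mper
\]
Each iteration therefore succeeds with probability at least $1/(mn)$, giving expected iteration count $O(mn)\subseteq O(n^2m)$. A single iteration performs only a kernel computation on an $n\times n$ matrix and $m$ subspace-membership tests, so each iteration runs in polynomial time.

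The main obstacle is the structural lemma: proving $U\cap T=\{0\}$ from Condition~\ref{cond:general} is what enables clean extraction of all inliers in the sample from a single dependent iteration. Once this is in place the Las Vegas guarantee reduces to the easy integrality-based reverse Markov bound above.
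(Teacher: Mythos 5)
Your algorithm is the paper's {\sc RandomizedFind} (sample $n$ points, detect a dependence, read the inliers off its support, span and test membership), but both key lemmas are proved by different arguments, and both of yours are sound. For the structural step, the paper argues by support reduction: if a kernel vector of $A_V$ had an outlier in its support, Carath\'eodory would give a dependence on that outlier plus at most $d$ inliers, contradicting Condition~\ref{cond:general}; you instead prove $U\cap T=\{0\}$ (with $U$ the span of the sampled outliers) by extending the sampled outliers and $d$ inliers to an $n$-set with exactly $d$ inliers, then splitting any dependence across $U$ and $T$. Your route is more self-contained, but the extension step silently needs at least $n-d$ outliers in the data set; this follows from the section's standing assumption $\mathrm{rank}(A)=n$ (inliers span only $T$), not from Condition~\ref{cond:general} alone, and should be said. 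For the sampling step, the paper conditions on the sign of $X-\E X$ and gets success probability $\ge 1/(2n^2m)$; your integrality observation $n|L|-dm\ge1$ plus reverse Markov gives the cleaner, stronger bound $1/(m(n-d))\ge 1/(mn)$, i.e.\ expected $O(nm)$ iterations, which implies the stated $O(n^2m)$. One notational slip: with the points as rows of $M$ you should work with the left kernel (dependencies among rows); since $M$ is square the singularity test is unchanged and the rest of your argument already reads that way.
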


\noindent In fact, an interesting comparison can be drawn between our
algorithm and the famous {\em RANSAC} method of \cite{FB}:
Both approaches repeatedly select a random set of $n$ points; {\em RANSAC}
works when this sample contains {\em only} inliers, whereas our algorithm
works when the sample contains at least $d+1$ inliers. The main observation is
that even if a set of $n$ points contains many outliers, only the inliers can
participate in a linear dependence. In fact, for $d = n/2$ and even if inliers
make up $3/4$ of the points, {\em RANSAC} will take an exponential number of
iterations to find $T$ while our algorithm requires only a constant number of
iterations (see Remark~\ref{remark:const}). 

Our algorithm can also be made stable in that the inliers do not
need to be exactly contained within $T$.  Here we need
Condition~\ref{cond:general2}: the smallest determinant of any set of points
with at most $d$ inliers is strictly larger than the largest determinant of
any set of points with at least $d+1$ inliers. 

\begin{theorem}\label{thm:find2}
If a set of $m$ points in $\R^n$ has strictly more than $\frac{d}{n} m$ inliers and meets Condition~\ref{cond:general2}, then there is a Las Vegas algorithm
whose output is the set $L$ of inliers, each iteration can be implemented in polynomial time and the expected number of iterations is $O(n^2 m)$. 
\end{theorem}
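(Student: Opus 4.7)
The approach is to adapt the Las Vegas algorithm from Theorem~\ref{thm:find} by replacing the exact linear-dependence test with a determinant threshold test justified by Condition~\ref{cond:general2}. In each iteration the algorithm draws a uniformly random sample $S$ of $n$ input points and computes $|\det S|$. By Condition~\ref{cond:general2} there is a threshold $\tau$ strictly separating samples with $\le d$ inliers (large determinant) from samples with $\ge d+1$ inliers (small determinant); declare $S$ ``good'' if $|\det S|<\tau$. Since the event ``sample has at least $d+1$ inliers'' is combinatorially identical to the event that drove the analysis in the proof of Theorem~\ref{thm:find}, the probability of drawing a good sample per iteration is unchanged, and the $O(n^2 m)$ expected iteration bound transfers verbatim.

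Given a good sample $S$, I would extract a candidate $d$-dimensional subspace $T'$ from the approximate null space of $S$, i.e., from the bottom $k-d$ right singular vectors of the $n\times n$ matrix whose columns are the points of $S$, where $k\ge d+1$ is the unknown number of inliers in $S$. In the exact setting of Theorem~\ref{thm:find} the corresponding null space has support \emph{exactly} on the inlier columns; in the stable setting the analogous statement holds approximately, and one can read off the inliers in $S$ and hence $T'$ as the span of any $d+1$ of them, using the general-position hypothesis within $T$, up to an error controlled by the determinant gap. A verification step then counts the input points lying within distance $\delta$ of $T'$ for a suitable tolerance $\delta$, and accepts $T'$ iff this count strictly exceeds $\tfrac{d}{n}m$.

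Two remarks on this plan. First, the verification sidesteps the need to know $\tau$ or $\delta$ explicitly: any candidate subspace produced from a bad sample will fail the count test because, by Condition~\ref{cond:general2}, a sample with at most $d$ inliers yields a subspace that is quantitatively far from containing the true inliers. Second, the main obstacle is the quantitative stability of the extraction step: one must translate the determinant separation of Condition~\ref{cond:general2} into a sufficiently sharp separation in the singular values (and hence singular vectors) of the matrix formed by $S$ versus those of its inlier submatrix, so that $T'$ is provably close enough to $T$ for the verifier to succeed with the chosen $\delta$. This is a standard but delicate perturbation analysis in the spirit of Weyl's inequality and Davis--Kahan, and I expect it to be the only place where nontrivial work beyond Theorem~\ref{thm:find} is needed.
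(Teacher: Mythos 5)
Your sampling analysis is fine: Lemma~\ref{lemma:lasvegas} transfers verbatim, and by Condition~\ref{cond:general2} the test $\det(A_V^T A_V)<C^2$ on a size-$n$ sample detects exactly the event that the sample contains at least $d+1$ inliers, so the $O(n^2m)$ expected iteration count is correct. But the second half of your plan has a genuine gap, and it is precisely the part you defer as ``standard but delicate'': nothing in Condition~\ref{cond:general2} supplies the quantitative control your SVD extraction and distance-$\delta$ verification need. The condition is purely a statement about determinants of subsets; it gives no relation between the gap at $C^2$ and singular-value gaps of $A_V$, no bound on how far inliers sit from $T$ (they are only ``approximately'' contained, in an unspecified sense), and no separation of outliers from $T$ in distance. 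Consequently there is no principled choice of $\delta$, the Davis--Kahan step cannot even be set up from the stated hypotheses, and---more basically---the theorem requires the output to be the \emph{exact} set $L$, which a ``count points within $\delta$ of $T'$'' test cannot certify: an outlier near $T'$ or an inlier slightly off it would be misclassified, and no hypothesis excludes this. (You also do not know the number $k$ of inliers in the sample, so the dimension of the ``approximate null space'' you extract is itself undetermined.)

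The paper's proof avoids all of this by using the part of Condition~\ref{cond:general2} that your plan never invokes: it applies to \emph{every} subset of size at most $n$, not only to the sampled set of size exactly $n$. Given a successful sample $V$ (so $\det(A_V^TA_V)<C^2$, hence at least $d+1$ inliers), one peels off points one at a time, always removing a point $u$ with $\det(A_{V-\{u\}}^TA_{V-\{u\}})<C^2$; such a $u$ exists while $|V|>d+1$ (remove an outlier, or an inlier if there are more than $d+1$), and the determinant test certifies that the invariant ``at least $d+1$ inliers'' is preserved, so the loop ends with $V$ equal to a set of exactly $d+1$ inliers. Each remaining point $v$ is then classified exactly by one further determinant test on $(V\setminus\{u\})\cup\{v\}$ for some $u\in V$: that set has $d+1$ inliers if and only if $v\in L$, so the threshold $C^2$ recovers $L$ exactly. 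No subspace is computed, no spectral perturbation bound is needed, and no tolerance parameter appears. To salvage your spectral route you would need additional quantitative assumptions (relating $C$ to the distances of inliers and outliers from $T$) that the theorem does not grant.
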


Our estimator achieves a constant breakdown point for, say, $d = n/2$. 
Yet there are numerous inefficient estimators that achieve a better breakdown point (e.g. a constant breakdown point
even when $d = n-1$). We provide evidence that our estimator is the {\em optimal} compromise between
efficiency and robustness: it is {\em small set expansion} hard to improve
the breakdown point beyond this threshold. We state our result informally here:



\begin{theorem}
There is an efficient reduction from an instance of $(\epsilon, \delta)$-{\sc Gap-Small-Set Expansion} on a graph $G$ to {\sc
Gap Inlier} such that:
\begin{itemize}
\item if there is a small non-expanding cut in $G$ then there exists a subspace of dimension 
$d$ containing at least $(1-\epsilon)\frac{d}{n}$ fraction of the points

\item and if there is no small non-expanding cut
then every subspace of dimension $d$ contains at most a $2\epsilon\frac{d}{n}$
fraction of the points.
\end{itemize}
\end{theorem}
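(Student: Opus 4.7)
The plan is to reduce from a $D$-regular instance $G = (V,E)$ of $(\epsilon,\delta)$-\textsc{Gap-Small-Set Expansion} on $|V| = n$ vertices by taking the point set to be the edge vectors $\{x_{uv} := e_u - e_v : (u,v) \in E\} \subset \R^n$ and setting the target dimension to $d = \delta n$. The conceptual heart of the reduction is that every subspace $T \subseteq \R^n$ induces an equivalence relation on $V$ via $u \sim_T v \iff e_u - e_v \in T$, and the edge vectors contained in $T$ are exactly the edges internal to the classes of $\sim_T$. So the problem reduces to analyzing vertex partitions whose non-singleton classes are small and (in the YES case) dense.

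For the YES direction, fix a set $S \subseteq V$ with $|S| = \delta n = d$ and $\Phi(S) \leq \epsilon$, and take $T_S := \mathrm{span}\{e_v : v \in S\}$, of dimension exactly $d$. An edge $(u,v)$ produces a point in $T_S$ iff both endpoints lie in $S$, so combining the handshake identity $2|E(S)| + |E(S,\bar S)| = D|S|$ with $|E(S,\bar S)| \leq \epsilon D|S|$ and dividing by $|E| = Dn/2$ gives the claimed $(1-\epsilon) d/n$ fraction. For the NO direction, let $C_1, \ldots, C_k$ be the classes of $\sim_T$ for an arbitrary $d$-dimensional subspace $T$, with sizes $s_1, \ldots, s_k$. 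Choosing one representative $r_i$ per class, the differences $\{e_u - e_{r_i} : u \in C_i\setminus\{r_i\}\}$ are linearly independent (each basis vector $e_u$ appears in exactly one of them) and all lie in $T$, which gives $n - k = \sum_i (s_i - 1) \leq d$. This single inequality has two crucial consequences: every class satisfies $s_i \leq d + 1$, and the non-singleton classes satisfy $\sum_{s_i \geq 2} s_i \leq 2 \sum_{s_i \geq 2}(s_i - 1) \leq 2d$, using $s \leq 2(s-1)$ for $s \geq 2$. Applying the SSE NO-condition to each non-singleton class then gives $|E(C_i)| \leq \epsilon D s_i / 2$, and summing over classes produces $|E(T)| \leq \epsilon D d$, which after dividing by $|E| = Dn/2$ yields the fraction bound $2\epsilon d/n$.

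The step I expect to be most delicate is verifying that every equivalence class really falls under the SSE NO-guarantee: the bound $s_i \leq d + 1$ sits right at the $\delta n$ threshold, so to be safe I would either instantiate SSE at a slightly enlarged parameter $\delta' = \delta + 1/n$, or peel off a single vertex from the lone possibly borderline class and absorb the resulting $O(D)$ edges into the constants. Once this bookkeeping is handled, the linear-algebraic structural correspondence between subspaces and vertex partitions is elementary, and the reduction is manifestly polynomial-time since it simply writes down $|E|$ edge-difference vectors in $\R^n$.
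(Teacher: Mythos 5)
Your completeness direction is correct and matches the paper's, and your structural lemma is actually a nice variant: using the deterministic difference vectors $e_u-e_v$ and the equivalence relation $u\sim_T v \iff e_u-e_v\in T$, with the representative-vector argument giving $\sum_i (s_i-1)\le d$, is a cleaner substitute for the paper's random-coefficient vectors $\alpha_e e_i+\beta_e e_j$ and its bipartite-neighborhood dimension claim $|N_B(F)|/2\le \dim \mathrm{span}(P)\le |N_B(F)|$.

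The gap is in the soundness step ``applying the SSE NO-condition to each non-singleton class.'' The Gap-SSE promise in the NO case controls only sets whose measure lies near $\delta$ (the paper's formal theorem assumes $\phi_G(\delta')\ge 1-\epsilon$ for $\delta'\in[\delta\epsilon/2,\,2\delta]$, and even the strengthened forms of the SSE hypothesis only give a window around $\delta$); it says nothing about sets of measure $2/n$, or $\Theta(\Delta/n)$, or anything below $\delta\epsilon/2$. Your equivalence classes can be exactly such sets: nothing in the NO promise prevents the graph from containing many disjoint dense (even $\phi\approx 0$-looking, as far as the promise is concerned) vertex sets of measure just below $\delta\epsilon/2$, and your per-class bound $|E(C_i)|\le \epsilon\Delta s_i/2$ is simply not available for them. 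Since the total measure of the non-singleton classes can be as large as $2\delta$, falling back to the trivial bound $|E(C_i)|\le \Delta s_i/2$ for the uncovered classes only yields a $2\delta$ fraction of points, not $2\epsilon\delta$. Note that the end you flagged as delicate (classes of size $d+1$, measure $\delta+1/n$) is the harmless one, since the promise window extends up to $2\delta$; the problem is at the small end.

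The repair is to stop treating classes individually: let $S$ be the union of all non-singleton classes (equivalently, the set of endpoints of the inlier edges). Every inlier edge is internal to $S$, and your inequality $\sum_i(s_i-1)\le d$ gives $\mu(S)\le 2\delta$. If $\mu(S)<\delta\epsilon/2$, the trivial bound $|E(S,S)|\le \Delta|S|/2$ already gives a fraction below $\epsilon\delta/2$; otherwise $\mu(S)$ lies in the promised window, so $|E(S,S)|\le \epsilon\Delta|S|/2\le \epsilon\Delta\delta n$, i.e.\ at most a $2\epsilon\delta$ fraction of the $m=\Delta n/2$ points. This is precisely the paper's soundness argument (its $S$ is the neighborhood of the inlier edge set in the edge--vertex incidence graph), and it also explains why the theorem must be stated with the expansion hypothesis over the whole range $[\delta\epsilon/2,2\delta]$ rather than only at measure $\delta$ --- a point your write-up would need to absorb in any case.
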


\noindent \cite{K} proved a related result that it is NP-hard to
find a $d = n-1$ dimensional subspace that contains a $(1-\epsilon)\frac{n-1}{n}$ fraction
of the data points\footnote{This follows by applying a padding argument to the knapsack instance before 
proceeding with the reduction in \cite{K}.}. In general, it seems difficult to base hardness for robust subspace recovery
(when $d < n-1$) on standard assumptions and this is an interesting open question. 

%

Taking a step back, computational complexity is an important lens for understanding learning and
statistical problems in the sense that there are many sample-efficient
estimators, e.g., maximum likelihood, that are hard to compute, but by allowing
more samples than the information theoretic minimum we can find alternatives
that are easy to compute. Yet these hard estimators are still favored in
practice, perhaps not just due to their sample efficiency but also due
to their robustness. A broader goal of our paper is to bring to light
questions about whether there are estimators that meet all three objectives of
being efficiently computable, sample efficient and robust (and not just two
out of three). 


\subsection{Derandomization and Duality for Robust Subspace Recovery}

The crucial step in our randomized algorithm is to repeatedly sample subsets 
of $n$ points and once we find one that is linearly dependent, we can use this 
subset to recover the set of inliers. If a collection of $m$ points in $\R^n$ has the 
property that a random subset of $n$ points is linearly dependent (with non-negligible probability), 
can we find such a subset deterministically? We give a solution to this problem
using tools from matroid theory.

Indeed, a well-studied polytope in matroid literature is the {\em basis
polytope} which is the convex hull of all sets of $n$ points that form a basis
(see Section~\ref{sec:radbp}). Condition~\ref{cond:general} guarantees us that
the vector $\frac{n}{m} \vec{1}$ is outside the basis polytope, and our goal
of finding a set of $n$ points that do not span $\R^n$ can be stated
equivalently as finding a Boolean vector (whose coordinates sum to $n$) that
is also outside the basis polytope.

There has been a vast literature on the basis polytope and on submodular minimization,
and there are deterministic strongly polynomial time algorithms for deciding membership
in the basis polytope---see \cite{E,C,GLS,S,IFF}.
Our idea is in each step we find a line segment $\ell$ that contains the
current vector (starting with $\frac{n}{m} \vec{1}$). Since the current vector
is outside the basis polytope it is easy to see that at least one of the
endpoints of $\ell$ must also be outside. So we can move the current vector to
this endpoint and if we choose these segments $\ell$ in an appropriate way we
will quickly find a Boolean solution. The key is that a membership oracle for
the basis polytope tells us which endpoint of $\ell$ we should move to. Hence
we obtain an algorithm that is not only an optimal tradeoff between efficiency
and robustness, but is even deterministic: 
\begin{theorem}
If a set of $m$ points in $\R^n$ has strictly more than $\frac{d}{n} m$ inliers and meets Condition~\ref{cond:general}, then there is a deterministic polynomial time algorithm
whose output is the set $L$ of inliers.
\end{theorem}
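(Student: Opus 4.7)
The plan is to derandomize the sampling step of Theorem~\ref{thm:find} by \emph{constructing}, rather than guessing, a Boolean vector that certifies a dependent $n$-set. Let $M$ be the linear matroid on the ground set $[m]$ whose bases are the linearly independent $n$-subsets of the data, and let $P_M \subseteq \R^m$ be its basis polytope, i.e.\ the convex hull of the indicator vectors of bases. Then $P_M \subseteq \Delta_{n,m} \defeq \{x \in [0,1]^m : \sum_i x_i = n\}$, and the Boolean points of $\Delta_{n,m} \setminus P_M$ are exactly the indicators of linearly dependent $n$-subsets. Once such a subset is in hand, we recover $L$ exactly as in the randomized algorithm.

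First, the uniform point $x^{(0)} \defeq (n/m)\vec{1}$ lies outside $P_M$. By Edmonds' description, $P_M$ satisfies the rank inequalities $\sum_{i \in S} x_i \leq r_M(S)$ for every $S \subseteq [m]$. Taking $S = L$, general position of the inliers inside $T$ forces $r_M(L) = d$, and the hypothesis $|L| > (d/n)m$ gives
\[
\sum_{i \in L} x^{(0)}_i \;=\; \tfrac{n}{m}\,\abs{L} \;>\; d \;=\; r_M(L)\mcom
\]
so $x^{(0)} \notin P_M$.

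The core step is iterative rounding. Maintain $x^{(t)} \in \Delta_{n,m} \setminus P_M$. Since $\sum_i x^{(t)}_i = n$ is an integer, either $x^{(t)}$ is Boolean (and we are done) or it has at least two fractional coordinates $i,j$. Move along the direction $e_i - e_j$ to the two endpoints $x^+, x^-$ of the maximal segment through $x^{(t)}$ that stays in $[0,1]^m$; both endpoints lie in $\Delta_{n,m}$ and each has strictly more integer coordinates than $x^{(t)}$. By convexity of $P_M$ and the fact that $x^{(t)} \notin P_M$, at least one of $x^\pm$ is outside $P_M$. A strongly polynomial deterministic membership oracle for $P_M$ --- obtainable from Cunningham's combinatorial algorithm or by minimizing the submodular function $S \mapsto r_M(S) - \sum_{i \in S} x_i$, see \cite{C,IFF,S} --- identifies such an endpoint, which we take to be $x^{(t+1)}$. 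After at most $m$ iterations we reach a Boolean $x^\ast \in \Delta_{n,m} \setminus P_M$.

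Finally, $x^\ast$ is the indicator of a linearly dependent $n$-subset $S$. Compute any circuit $C \subseteq S$ by Gaussian elimination; by Condition~\ref{cond:general} and the general-position hypothesis on outliers, $C$ consists entirely of inliers with $\abs{C} = d+1$, so $C$ spans $T$, and outputting the set of all $m$ input points lying in the span of $C$ returns exactly $L$. The main obstacle is the rounding step: the convex-geometric argument is short once one uses the direction $e_i - e_j$ to keep both endpoints in the hypersimplex, but the derandomization crucially rests on the existence of a deterministic strongly polynomial membership oracle for the basis polytope, which is the classical matroid-theoretic black box replacing the randomized sampling in Theorem~\ref{thm:find}.
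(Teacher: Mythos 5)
Your proposal is correct, and it follows the same blueprint the paper uses for its deterministic algorithm: certify that $\frac{n}{m}\vec{1}$ lies outside the basis polytope, round it to a Boolean point outside the polytope using the deterministic membership oracle from submodular minimization, and then recover $L$ from the resulting dependent $n$-set via the kernel/circuit argument of Claim~\ref{claim:support}. Where you differ is in the concrete rounding and its correctness proof. You stay inside the fixed hypersimplex and the fixed basis polytope $P_M$, move along directions $e_i-e_j$ between two fractional coordinates, and need only two facts: the initial violation of the rank inequality $\sum_{i\in L}x_i\le r_M(L)\le d$, and convexity of $P_M$ (so one endpoint of each segment stays outside). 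The paper's {\sc DerandomizedFind} instead deletes one point at a time, querying whether the rescaled uniform vector $\frac{n}{|U\setminus\{i\}|}\vec{1}$ lies in the basis polytope of the submatrix $A_{U\setminus\{i\}}$, and its correctness argument goes through the explicit description of $K_A$ under Condition~\ref{cond:general} (Corollary~\ref{lemma:poly}): a `NO' answer is equivalent to the inlier fraction exceeding $\frac{d}{n}$, and Lemma~\ref{lemma:nomult} maintains that invariant until $|U|=n$, forcing $|U\cap L|\ge d+1$. Your variant buys generality and a shorter proof --- it is purely matroid-theoretic and never needs the explicit polyhedral characterization, so it would work verbatim for any matroid once the starting point is certified infeasible --- while the paper's version keeps every queried point uniform on the surviving set, which ties the derandomization transparently to the randomized sampling algorithm and to the inlier-fraction picture. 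Your iteration bound ($\le m$ coordinate-pair steps, each one oracle call plus a membership test) and the final recovery step (circuit of size $d+1$ consisting only of inliers, spanning $T$) both check out, the latter being exactly Claim~\ref{claim:support}, which you could simply cite rather than reprove.
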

The basis polytope not only plays a central role in robust subspace recovery but is also closely related to a notion studied in functional
analysis that we call \emph{radial isotropic position}. 
In fact, \cite{Barthe98} studied a convex programming
problem whose optimal solution finds a linear transformation that places a set
of points in radial isotropic position (see Section~\ref{sec:barthe}) if it
exists. The connection is that the optimal value to this convex program is finite (i.e. there is
such a transformation) if and only if the vector $\frac{n}{m} \vec{1}$ is
inside the basis polytope. 

Barthe's convex program provides a connection between radial isotropic
position and robust subspace recovery: just as placing a set of points in
isotropic position is a proof that the set of points is not contained in a
low-dimensional subspace, so is placing a set of points in radial
isotropic position a proof that there is no $d$-dimensional subspace that
contains more than a $\frac{d}{n}$ fraction of the points (see
Section~\ref{sec:radbp}). We give effective bounds on the region in which an optimal
solution to the convex program is contained, and how strictly convex the function is and use this to give an
efficient algorithm to compute radial isotropic position. 

\begin{theorem}[informal]
There is a deterministic polynomial time algorithm to compute a linear
transformation $R$ that places a set of points in radial isotropic position,
if such a transformation exists. 
\end{theorem}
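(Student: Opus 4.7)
The plan is to apply the ellipsoid method (or a standard interior-point scheme) to Barthe's convex program, once we have explicit effective bounds on where the minimizer lies and on the curvature of the objective. Concretely, given points $v_1,\dots,v_m \in \R^n$, Barthe's functional has the form
\[
f(\vec{t}) \;=\; \log\det\Paren{\sum_{i=1}^m e^{t_i} v_i v_i^\top} \;-\; \sum_{i=1}^m c_i\, t_i\mcom
\]
for weights $c_i$ summing to $n$ (in our setting, uniform $c_i = n/m$). A standard computation shows that the gradient vanishes exactly when the linear map $R = \Paren{\sum_i e^{t_i} v_i v_i^\top}^{-1/2}$ places the $v_i$ in radial isotropic position with weights $c_i$, and that $f$ is convex on $\R^m$.

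First I would characterize when a minimizer exists: by the duality developed in the preceding section, the infimum of $f$ is attained (and is finite) precisely when $\vec{c} = \tfrac{n}{m}\vec{1}$ lies in the interior of the basis polytope of the vector matroid on $v_1,\dots,v_m$. Given that we are in this regime, I would next give an effective bound on the location of an approximate minimizer. The trick is to compute, for any direction $\vec{u}$ in $\R^m$, the rate at which $f$ grows along $\vec{u}$; since $\log\det$ is dominated by the maximum eigenvalue and the linear term grows linearly, $f(\vec{t} + s\vec{u}) \to +\infty$ as $s \to \infty$ whenever $\vec{c}$ is interior to the basis polytope, with quantitative rate controlled by the distance of $\vec{c}$ from the boundary (which in turn is controlled, in the inlier setting, by $1/m$ and the minimum/maximum subdeterminants of the configuration). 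This pins down a ball of explicit polynomial radius containing an $\epsilon$-approximate minimizer.

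Next I would address strict convexity. Since $f$ is invariant under adding a common constant to coordinates whose corresponding $v_i$ sum to zero, I restrict to the complementary subspace and show, via a second-variation computation of $\log\det$, that on this subspace the Hessian is positive definite with smallest eigenvalue bounded below by a quantity polynomial in $1/m$ and the condition number of the configuration. Combined with the domain bound, this yields that the ellipsoid method finds an $\epsilon$-approximate minimizer $\vec{t}^*$ in time polynomial in $n$, $m$, $\log(1/\epsilon)$, and the bit-length of the input; evaluating the gradient amounts to a single determinant and inverse, hence is polynomial-time. Finally, I would argue that from a sufficiently accurate $\vec{t}^*$ we can recover a linear transformation $R$ for which the $Rv_i/\|Rv_i\|$ satisfy the radial isotropic identity up to any desired inverse-polynomial error, and then, if needed, round to an exact solution using a Newton-style refinement whose quadratic convergence is guaranteed by the explicit strong convexity bound.

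The main obstacle is the quantitative step: pinning down the growth rate of $f$ outside a polynomially large ball, and the smallest eigenvalue of its restricted Hessian, in terms of geometric quantities of the input like minimum nonzero subdeterminants and the distance from $\vec{c}$ to the boundary of the basis polytope. These two effective bounds are what convert Barthe's existence result into a polynomial-time algorithm, and verifying them requires careful bookkeeping of how the $\log\det$ term interacts with the combinatorial structure of the vector matroid.
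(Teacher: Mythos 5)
Your overall plan is the paper's plan: Barthe's convex program, whose first-order optimality conditions give the radial isotropic map $R=(\sum_i e^{t_i}v_iv_i^\top)^{-1/2}$; an effective bound locating the optimizer in a ball whose radius is controlled by the slack of $c$ inside the basis polytope and the minimum nonzero subdeterminant $D=\min_{I:\,d_I\neq 0}d_I$; a quantitative convexity statement; and the Ellipsoid method with gradient evaluations. That matches Lemma~\ref{yes}, Section~\ref{sec:computerip} and Theorem~\ref{thm:efficient}.

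However, your strict-convexity step has a genuine gap. First, the flat direction you identify is wrong: $f$ is invariant under adding a common constant to \emph{all} coordinates, because $\log\det$ picks up $na$ while the linear term picks up $a\sum_i c_i=an$; it has nothing to do with the $v_i$ summing to zero. Second, and more seriously, restricting to the complement of this direction does not by itself make the Hessian positive definite: if the configuration decomposes (e.g.\ the $v_i$ split into two groups spanning complementary coordinate subspaces, so that every nonzero $d_I$ uses a fixed number of indices from each group), there are further flat directions, and no eigenvalue lower bound of the kind you assert can hold. The paper rules this out with an explicit genericity hypothesis---for every $i,j\in[m]$ there is $S$ with $|S|=n-1$, $d_{S\cup\{i\}}\neq 0$ and $d_{S\cup\{j\}}\neq 0$---and proves a midpoint convexity gap for $b$-separated points (Lemma~\ref{lem:strict}) via the Cauchy--Schwarz slack and Lagrange's identity in the Cauchy--Binet expansion, rather than a Hessian bound. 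Note also that the resulting gap is only of size roughly $b^2\exp(-(n+1)(\|s\|_\infty+\|t\|_\infty))\min_I d_I^2/\det(AA^*)$, i.e.\ exponentially small in the box bound $B$; your claim of a bound polynomial in $1/m$ and a condition number is stronger than what is available, though exponential smallness suffices because the Ellipsoid method depends only logarithmically on the target accuracy. Finally, you should drop the ``round to an exact solution'' step: the optimizer satisfies transcendental conditions $e^{t_j}=c_j\|Rv_j\|^{-2}$ and is not exactly computable in general; the theorem is established in the approximate sense, producing $\sum_j c_j \frac{Rv_j}{\|Rv_j\|}\otimes\frac{Rv_j}{\|Rv_j\|}=\mathrm{Id}_n+J$ with $\|J\|_\infty\le\epsilon$ in time polynomial in $\log(1/\epsilon)$.
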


\noindent Notably, this theorem shows that if there is no low-dimensional subspace that contains many 
of the points, we can deterministically compute a certificate that there is no such subspace. 

Radial isotropic position can also be thought of as a more stable analogue
of isotropic position that is not sensitive to either the norms of the data
points or to a constant fraction of adversarial outliers! Isotropic position
has important applications both in algorithms and in exploratory data
analysis, but is quite sensitive to even a small number of outliers (see e.g.
\cite{Vem}). 
%
Just as robust statistics asks for estimators that are well-behaved in the presence of outliers,
we could ask for {\em canonical forms }(e.g. isotropic position, radial isotropic position) that are well-behaved
in the presence of outliers.
Perhaps radial isotropic position will be a preferable
alternative in some existing applications where being robust is crucial.

Somewhat surprisingly, this elementary problem of finding a low-dimensional subspace that contains many
of the data points is connected to a number of problems and combinatorial objects including the small set expansion hypothesis, the independent set polytope and submodular minimization 
and notions in functional analysis, and we make use of all of these connections.

\subsection{Related Work}

Our work fits into a broader agenda within statistics and machine learning:
Can we recover a low-rank matrix from noisy or incomplete observations? The
foundational work of \cite{RFP} and \cite{CR} gave convex programming
algorithms that provable recover a low-rank matrix when given a small number
of random chosen entries in the matrix. These techniques have since been
adapted to settings in which an adversary can corrupt some of the entries in
the matrix \cite{CSPW,PCA,XCS}. However we note that there are two
incomparable models for how an adversary is allowed to corrupt the entries in
a low-rank matrix, and which model is more natural depends on the setting. For
example, the exciting work of \cite{PCA} considers a model in which an
adversary can corrupt a constant fraction of the entries of $A$ whose
locations are chosen uniformly at random. In contrast, the model in
\cite{XCS,ZL} for example allows an adversary to corrupt a large
fraction of the columns of $A$. This is the setting in our work, and this
assumption is most natural when we think of columns of $A$ as representing
individuals from a population and uncorrupted columns correspond to
individuals that fit the model, but we would like to make as few assumptions
as possible about the remaining individuals that do not fit the model. The
breakdown point of robust subspace recovery was also studied recently by
\cite{XCM13} where successful subspace recovery is guaranteed in a suitable
stochastic model.

We note that much of the recent work from statistics and machine learning has
focused on a setting where one posits a distributional model that generates
both the inliers and outliers and the goal is to recover the subspace $T$ with
high probability. For example, see the recent work of \cite{SC} and
\cite{Needle} and references therein.  In principle, our work is not directly
comparable to these models since our results are not contingent on any one
distributional model.  Yet in some of these probabilistic models (e.g. in
\cite{Needle}) the probability that a point is chosen from the subspace $T$ is
larger than $\frac{d}{n}$ in which case Condition~\ref{cond:general} is
satisfied with high probability and hence our algorithm succeeds in these
cases too. 

The above discussion has focused on notions of robustness that allow an adversary to corrupt a constant fraction of the entries in 
the matrix $A$. However, this is only one possible definition of what it means for an estimator to be robust to noise. For example,
principal component analysis can be seen as finding a $d$-dimensional subspace that minimizes the sum of squared distances to the
data points. A number of works have proposed modifications to this objective function (along with approximation algorithms) in the hopes
that this objective function is more robust. As an example,  \cite{DTV} gave
a $O(p^{p/2})$ approximation algorithm for the problem of finding a subspace that minimizes the sum of $\ell_p$ distances to the data points (for $p > 2$). Another
example is the recent work of  \cite{NRV} which gives a constant factor approximation for finding a $d$-dimensional subspace that
maximizes the sum of Euclidean lengths of the projections of the data points (instead of the sum of squared lengths). Lastly, we mention that
 \cite{DunaganV00} gave a geometric definition of an outlier (that does not depend on a hidden subspace $T$) and give an optimal algorithm for removing outliers according to this definition.

\subsection*{Acknowledgments}
We thank Quentin Berthet for pointing out an issue in an earlier version of
this manuscript. Thanks to Gilad Lerman for many helpful discussions and to
Joel Tropp for pointers to the literature.

\section{A Simple Randomized Algorithm}\label{sec:simple}

Here we give a randomized algorithm for robust subspace recovery. The idea is that once we find any 
non-trivially sparse linear dependence we can use it to find the set of inliers provided that the inliers are
in general position with respect to $T$. The breakdown point of this estimator is exactly the threshold at which
a random set of $n$ points is linearly dependent with non-negligible probability. Surprisingly, in Section~\ref{sec:sse}
we give evidence based on the small set expansion conjecture that there is no efficient estimator that has a better
breakdown point. 

We will think of an instance of robust subspace recovery as a matrix
$A\in\R^{n\times m}$ with $m\ge n$ and rank $n.$ Throughout this paper for $V \subset [m]$, we will let $A_V$ denote the 
submatrix corresponding to columns in $V$. Suppose that there is a
$d$-dimensional subspace $T$ that contains {\em strictly} more than a
$\frac{d}{n}$ fraction of the columns of $A$. Our goal is to recover this
subspace (under mild general position conditions on these points) efficiently.
 Let $L \subset [m]$ be the columns of
$A$ that are inliers. We will need the following condition which is almost surely satisfied by any reasonable probabilistic model
that generates inliers from the subspace $T$ and outliers from all of
$\R^{n}$:

\begin{condition}\label{cond:general}
A set of $n$ columns of $A$ is linearly independent if and only if 
at most $d$ of the columns are inliers.
\end{condition}

\begin{fragment*}[ht]
\caption{
\label{alg:rand}{\sc RandomizedFind}\\
\textbf{Input: } $A\in\R^{n\times m}$ which satisfies Condition~\ref{cond:general} \vspace*{0.01in}
}

\begin{enumerate} \itemsep 0pt
\small 
\item $start:$ Choose $V \subset [m]$ with $|V| = n$ uniformly at random
\item If $rank(A_V) < n$, 
\item $\quad$ Find $u \in ker(A_V)$; Set $\cL = span(\{A_i \colon u_i \neq 0
\})$; Set $L = \{i \colon A_i \in \cL\}$
\item $\quad$ Output $L$
\item Else return to \emph{start}
\end{enumerate} 

\end{fragment*}

The next lemma gives a lower bound on the
probability of sampling strictly more than expected number of inliers:
\begin{lemma}\label{lemma:lasvegas}
Suppose that we are given a set of $m$ points in $\R^n$ with strictly more
than $\frac{d}{n} m$ inliers. Let $V$ be a uniformly random set of $n$ points
(without repetition). Then the probability that $U$ contains at least $d + 1$
inliers is at least $p \geq \frac{1}{2n^2m}$.  
\end{lemma}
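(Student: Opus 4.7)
The plan is to set this up as a basic Markov-style calculation on a hypergeometric random variable, exploiting the strict inequality in the hypothesis and the fact that the number of inliers $X$ in the sample $V$ is an integer bounded above by $n$.

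First I would fix notation: let $\ell = \card{L}$ be the number of inliers. The hypothesis $\ell > \tfrac{d}{n}m$ together with the integrality of $\ell, d, m, n$ forces $n\ell \ge dm + 1$. Consequently the expected number of inliers in a uniformly random $n$-subset $V$ satisfies
\[
\E[X] \;=\; \frac{n \ell}{m} \;\ge\; d + \frac{1}{m}\mper
\]
So the expectation strictly exceeds $d$ by at least $1/m$; this is the only quantitative input I need.

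Next, I would convert this small expectation surplus into a lower bound on $p := \Pr[X \ge d+1]$ by splitting the expectation at the threshold $d$. Since $X$ takes integer values in $\{0,1,\dots,n\}$,
\[
\E[X] \;\le\; d\cdot\Pr[X \le d] \,+\, n\cdot\Pr[X \ge d+1] \;=\; d + (n-d)\,p\mper
\]
Rearranging and combining with the previous bound,
\[
p \;\ge\; \frac{\E[X] - d}{n-d} \;\ge\; \frac{1}{m(n-d)} \;\ge\; \frac{1}{nm} \;\ge\; \frac{1}{2n^2 m}\mper
\]
This gives the desired estimate (in fact a somewhat stronger $1/(nm)$ bound), completing the lemma.

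There is no real obstacle here; the only subtle point is using integrality to extract the $1/m$ gap between $\E[X]$ and $d$, since without it the strict inequality $\ell > dm/n$ would yield no quantitative lower bound on $p$. Everything else is just a one-line two-point bound on the expectation.
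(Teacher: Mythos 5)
Your proof is correct and follows essentially the same route as the paper: an expectation-splitting (reverse-Markov) argument whose only quantitative input is the gap between $\E[X]$ and $d$ forced by integrality, which the paper implements by centering $X$ at its mean and conditioning on the sign of $\widehat{X}$ while you split $\E[X]$ directly at the threshold $d$. Your bookkeeping is in fact slightly sharper, yielding $p \ge \frac{1}{m(n-d)} \ge \frac{1}{nm}$, which comfortably implies the stated bound $\frac{1}{2n^2 m}$.
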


\begin{proof}
Let $X$ be a random variable defined to be the number of inliers in a random
set $V$ of $n$ points. Then $E[X] > d$ and set $\widehat{X} = X - E[X]$. Then
let $p$ be the probability that $\widehat{X} \geq 0$, and this condition
certainly implies that we have at least $d + 1$ inliers. Since the expectation
of $\widehat{X}$ is zero, we have that $$ p E[\widehat{X} | \widehat{X} \geq 0] + (1-p)
E[\widehat{X} | \widehat{X} < 0] = 0$$ Then we can upper bound $E[\widehat{X}
| \widehat{X} \geq
0] \leq n-d$ and $-p E[\widehat{X} | \widehat{X} < 0] \leq p n$. 
Hence, 
$$ p (n-d) + pn
\geq -E[\widehat{X} | \widehat{X} < 0] \geq \frac{\lfloor \frac{d}{n} m \rfloor + 1}{m}
- \frac{d}{n} \geq \frac{1}{nm}$$ 
and this completes the proof of the lemma.
\end{proof}

\begin{remark}\label{remark:const}
We remark that the lower bound on $p$ can be improved to $p\ge(d/n)^2/2$
when $m\ge 6n+2$ and $n\ge 3.$ Hence our algorithm is quite practical in this
range of parameters. 
\end{remark}
Indeed, with the same notation as above, condition $X$ on the
event $E$ that the first two samples are contained in $L$ (the set of
inliers). Clearly, $\Pr\Set{E}\ge (d/n)^2.$ On the other hand, we still sample
$n-2$ points with replacement. Each sample now has a probability of landing in
$L$ that is at least
$q \ge \frac{dm/n - 2}{m-2} = 
\frac dn - \frac{n+d}{n(m-2)}
\ge \frac dn - \frac1{3n}\mper$
Here, we used that $m\ge 6n + 2.$ Hence,
$\E[X\mid E]\ge (n-2)(\frac dn - \frac1{3n})
\ge d-1\mcom$
where we used that $2/n + 1/3\le 1.$ On the other hand, we have
$\Pr\Set{X\ge \lfloor\E[X\mid E]\rfloor \mid E}\ge \frac12$
by the ``mean is median'' theorem for hypergeometric distributions
(see, e.g., \cite{KI}). It follows that 
$\Pr\Set{X \ge d+1} \ge 
\Pr\Set{X \ge d-1\mid E}\Pr\Set{E}
\ge \frac12 (d/n)^2\mper$

 The next claim captures the intuition that from any non-trivially sparse
linear dependence in $A$ it is easy to compute the set of inliers.

\begin{claim}\label{claim:support}
If $|V| = n$, then any vector in the kernel of $A_V$ must contain $d+1$ inliers in its support and no outliers.
\end{claim}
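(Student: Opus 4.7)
The plan is to split any nonzero $u\in\ker(A_V)$ according to the inlier/outlier decomposition of $V$ and argue that its outlier coordinates must all vanish. Since such a $u$ exhibits a linear dependence among the columns of $A_V$, Condition~\ref{cond:general} immediately forces $|V\cap L|\ge d+1$. Writing $I=V\cap L$ and $O=V\setminus L$, with $|I|\ge d+1$ and $|O|\le n-d-1$, I would rearrange $A_V u=0$ as
\[
\sum_{j\in I} u_j A_j \;=\; -\sum_{j\in O} u_j A_j,
\]
noting that the left-hand side lies in $T$ while the right-hand side lies in $\mathrm{span}\{A_j:j\in O\}$.

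The heart of the argument is to show that these two subspaces intersect only in zero. I would fix any $d$ inliers $I_0$ and show that $I_0\cup O$ is linearly independent: the set has $d+|O|\le n-1$ elements, and extending it to size $n$ by appending $n-d-|O|$ outliers from outside $I_0\cup O$ yields an $n$-set with exactly $d$ inliers, which Condition~\ref{cond:general} declares to be a basis; the subset $I_0\cup O$ is therefore independent. From this I would read off, first, that the columns indexed by $O$ are themselves independent, and second, that $\mathrm{span}(O)\cap\mathrm{span}(I_0)=\{0\}$. Since any $d$ inliers are independent and live in the $d$-dimensional subspace $T$, they span $T$, giving $\mathrm{span}(O)\cap T=\{0\}$.

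With this in hand, both sides of the displayed equation must vanish separately. Linear independence of $O$ then forces $u_j=0$ for every $j\in O$, so $\supp(u)\subseteq L$. The surviving relation $\sum_{j\in I}u_jA_j=0$ is a genuine dependence among inliers, and because any $d$ inliers are independent, $\supp(u)$ must contain at least $d+1$ of them.

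The one place that warrants care is the padding step: Condition~\ref{cond:general} is stated only for $n$-sets, so extending $I_0\cup O$ to size $n$ requires enough outliers to be available in the full instance. In the regime of the theorem this is routine, and in any case the edge $|O|=0$ trivializes the claim, since then the right-hand side of the split relation is empty and everything reduces to the purely inlier statement.
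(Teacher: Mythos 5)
Your proof is correct, but it takes a different route from the paper's. The paper argues by contradiction with a Carath\'eodory-type support reduction: assuming some kernel vector $u$ has a nonzero outlier coordinate, it rewrites the inlier part of the dependence (which lies in the $d$-dimensional span of the inlier columns) as a combination of at most $d$ inliers, producing a new kernel vector whose support has at most $d$ inliers yet still a nonzero outlier coordinate, contradicting Condition~\ref{cond:general}. You instead give a direct structural argument: from $|V\cap L|\ge d+1$ you split $A_Vu=0$ into inlier and outlier parts, prove $\mathrm{span}\{A_j:j\in O\}\cap T=\{0\}$ by padding $I_0\cup O$ to an $n$-set with exactly $d$ inliers and invoking Condition~\ref{cond:general}, and conclude that the outlier coefficients vanish and the remaining inlier dependence needs at least $d+1$ coordinates. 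The paper's exchange argument is shorter, but its final appeal to Condition~\ref{cond:general} is implicit in exactly the same way your padding step is: the reduced support need not be an $n$-set, so one either pads with outliers or falls back on the standing general-position assumption (any $d$ inliers independent, outliers generic) stated in Section~1.1. Your version makes the needed independence facts and the direct-sum structure $T\oplus\mathrm{span}\{A_j:j\in O\}$ explicit at the cost of requiring $n-d$ outliers to exist in the instance --- a caveat you rightly flag, and which is covered by the paper's ambient general-position assumption; note only that your remark that the case $O=\emptyset$ ``trivializes'' the claim still relies on that same assumption (or padding) to rule out a sparse dependence among fewer than $d+1$ inliers, so it is not trivial from Condition~\ref{cond:general} alone.
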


\begin{proof}
Suppose there is a vector $u \in ker(A_V)$ with $u_i \neq 0$ for $i \notin L$.
Since any $d+1$ inliers are linearly dependent, we can use Caratheodory's
Theorem (see \cite{Matousek}) to find a vector $v \in ker(A_V)$ supported on
at most $d$ inliers and for which $v_i \neq 0$. This contradicts
Condition~\ref{cond:general} since the support of any non-zero vector in the
kernel must contain at least $d + 1$ inliers. 
\end{proof}

\begin{fragment*}[t]
\caption{
\label{alg:reduce}{\sc RandomizedFind2}\\
\textbf{Input: } $A\in\R^{n\times m}$ which satisfies Condition~\ref{cond:general2} \vspace*{0.01in}
}

\begin{enumerate} \itemsep 0pt
\small 
\item $start:$ Choose $V \subset[m]$ with $|V| = n$ uniformly at random
\item If $det(A_V^T A_V) < C^2$
\item $\qquad$ While $|V| > d + 1$
\item $\qquad$ $\qquad$ Find $\{u\}$ such that $det(A_{V - \{u\}}^T A_{V -
\{u\}}) < C^2$; Set $V = V - \{u\}$
\item $\qquad$ Set $\cL =V \cup \{v \mid det(A_{V\Delta \{u, v\}}^T A_{V\Delta \{u, v\}}) < C^2\} $ where $u \in V$
\item Else return to \emph{start}.
\end{enumerate} 

\end{fragment*}

%
\begin{proof}[Theorem~\ref{thm:find}]
Claim~\ref{claim:support} guarantees the correctness of the algorithm, and
Lemma~\ref{lemma:lasvegas} guarantees that the success probability of each
iteration is at least $p \geq \frac{1}{2n^2m}$ and this implies the lemma.
\end{proof}

We are interested in generalizing our algorithm to the setting where inliers are only
approximately contained in the subspace. This idea is formalized next. 

\begin{condition}\label{cond:general2}
Any set $V$ of at most $n$ columns of $A$ has $det(A_V^T A_V)\ge C^2$ if the number of inliers is at most $d$, and otherwise strictly less than $C^2$. 
\end{condition}

We can now prove Theorem~\ref{thm:find2}, a stable analogue of Theorem~\ref{thm:find}.

\begin{proof}
Lemma~\ref{lemma:lasvegas} guarantees that the probability that the algorithm finds a set $V$ with $|V| = n$ and $det(A_V) < C$ in  is at least $p \geq \frac{1}{2n^2m}$, and furthermore the algorithm maintains the invariant that the set $V$ always has at least $d + 1$ inliers, and at the end of the while loop $V$ is a set of $d + 1$ inliers. Then Condition~\ref{cond:general2} guarantees that the algorithm correctly outputs the set of inliers. 
\end{proof}

\section{Computational Limits}\label{sec:sse}

We will now present evidence that the robust subspace recovery problem is
computationally hard beyond the breakdown point achieved by our randomized
algorithm in Section~\ref{sec:simple}. For this purpose we need to introduce the \emph{expansion
profile} of a graph.
Given a $\Delta$-regular graph $G=(V,E)$ we define the edge expansion of
a set $S\subseteq V,$ as
\[
\phi_G(S)=\frac{|E_G(S,V\backslash S)|}{\Delta|S|}\mper
\]
Here and in the following, we let $E_G(A,B)$
denote the set of edges in $G$ with one endpoint in $A$ and
the other in~$B.$ Let us also denote $\mu(S)=|S|/|V|.$
Given a parameter $\delta\in[0,1/2],$ we define the \emph{expansion profile}
of $G$ as the curve
\[
\phi_G(\delta) = \min_{\mu(S)=\delta}\phi(S)\mper
\]
With these definitions we describe the Small Set Expansion problem as was
recently studied by~\cite{RaghavendraS10,RaghavendraST10}:
\begin{definition}
The {\sc Gap-Small-Set Expansion} problem is defined as: Given a graph $G,$
and constants $\epsilon,\delta>0,$ distinguish the two cases
(1) $\phi_G(\delta)\ge 1-\epsilon,$ and, (2)
$\phi_G(\delta)\le \epsilon.$
\end{definition}
We will relate the previous problem to the {\sc Gap-Inlier} problem that we
define next.
\begin{definition}
The {\sc Gap-Inlier} problem is defined as: Given $m$ points
$u_1,\dots,u_m\in\R^n,$ and constants $\epsilon,\delta,$ distinguish the two cases
\begin{enumerate}
\item there exists a subspace of dimension $\delta n$ containing a
$(1-\epsilon)\delta$ fraction of the points,
\item every subspace of dimension $\delta n$ contains at most a
$\epsilon\delta$ fraction of the points.
\end{enumerate}
\end{definition}

Our next theorem shows a reduction from {\sc Gap-Small-Set Expansion} to {\sc
Gap Inlier}.

\begin{theorem}
Let $\epsilon,\delta>0.$
There is an efficient reduction which given a $\Delta$-regular graph
$G=(V,E),$ produces an instance $u_1,\dots,u_m\in\R^n$ of {\sc Gap-Inlier}
such that
\begin{description}
\item[Completeness:]
If $\phi_G(\delta)\le\epsilon,$ then there exists a subspace of dimension 
$\delta n$ containing at least $(1-\epsilon)\delta$ fraction of the points.
\item[Soundness:]
If $\phi_G(\delta')\ge1-\epsilon$ for every $\delta'\in[\delta \epsilon/2,2\delta],$ then
every subspace of dimension $\delta n$ contains at most a $2\epsilon\delta$
fraction of the points.
\end{description}
\end{theorem}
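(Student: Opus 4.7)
The plan is to construct, from a $\Delta$-regular graph $G=(V,E)$ on $n=|V|$ vertices, an instance of \textsc{Gap-Inlier} consisting of the $m=|E|$ edge vectors $u_e := e_u + e_v \in \R^n$ (one per edge $e=\{u,v\}$), where $e_w$ denotes the standard basis vector indexed by $w\in V$. This choice is natural because any vertex set $S\subseteq V$ yields a coordinate subspace $W_S := \mathrm{span}\{e_v : v\in S\}$ of dimension $|S|$ that contains exactly those $u_e$ whose edges lie entirely inside $S$, so the geometry of subspaces containing many $u_e$ should reflect the combinatorics of non-expanding sets in $G$.

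For completeness, I would pick a minimizer $S$ of the expansion profile at density $\delta$ (so $|S|=\delta n$ and $\phi_G(S)\le\epsilon$) and observe that $W_S$ contains exactly the $|E(S,S)| = \Delta|S|(1-\phi_G(S))/2 \ge (1-\epsilon)\delta\,|E|$ edge vectors coming from edges inside $S$, giving a $(1-\epsilon)\delta$ fraction of the $m$ points inside a subspace of dimension $\delta n$.

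The soundness direction is where the real work lies. Suppose for contradiction that some $\delta n$-dimensional subspace $W$ contains more than $2\epsilon\delta\,|E|$ edge vectors, and let $F\subseteq E$ denote the corresponding set of edges. The key combinatorial lemma I would establish is the identity
\[
\dim\,\mathrm{span}\{u_e : e\in F\} \;=\; |V(F)| - b(F),
\]
where $V(F)$ is the set of endpoints of edges in $F$ and $b(F)$ is the number of bipartite connected components of the subgraph $(V(F),F)$. I would prove this component by component. For a bipartite component $C$ with bipartition $(A_C,B_C)$, the vector $\sum_{u\in A_C} e_u - \sum_{v\in B_C} e_v$ is orthogonal to every edge vector of $C$ and provides the unique (up to scaling) dependence among them, so together with the fact that any spanning tree of $C$ contributes $|V_C|-1$ independent edge vectors, the span has dimension exactly $|V_C|-1$. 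For a non-bipartite component, an alternating-sign sum along an odd cycle realizes some $2e_v$ as a combination of edge vectors; extending along paths then recovers every basis vector in $V_C$, so the span equals $\mathrm{span}\{e_v : v\in V_C\}$ of full dimension $|V_C|$.

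With this identity in hand the rest is bookkeeping. Each bipartite component has at least two vertices, so $b(F)\le |V(F)|/2$, and the dimension bound $|V(F)|-b(F)\le\delta n$ forces $|V(F)|\le 2\delta n$. Conversely, $\Delta$-regularity gives $|V(F)|\ge 2|F|/\Delta > 2\epsilon\delta n$, so taking $S:=V(F)$ yields $\delta':=|S|/n \in (2\epsilon\delta,\,2\delta] \subseteq [\delta\epsilon/2,\,2\delta]$ and all of $F$ lies inside $S$; plugging $|E(S,S)|\ge|F|>\epsilon\delta\Delta n$ into $\phi_G(S)=1-2|E(S,S)|/(\Delta|S|)$ gives $\phi_G(S)<1-\epsilon$, contradicting the soundness hypothesis $\phi_G(\delta')\ge 1-\epsilon$. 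The main obstacle will be the dimension identity itself, and in particular the non-bipartite case where the odd-cycle structure of $G$ must be invoked; the parameter arithmetic that follows is essentially forced by the fact that bipartite components lose exactly one dimension each.
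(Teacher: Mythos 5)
Your proposal is correct, and its skeleton coincides with the paper's: the same edge-vector reduction (one point per edge, supported on its two endpoints), the same completeness computation via $\Delta|S| = 2|E_G(S,S)| + |E_G(S,V\setminus S)|$, and the same soundness bookkeeping that sandwiches $|V(F)|$ between $2|F|/\Delta$ and $2\delta n$ before invoking the expansion-profile hypothesis. The genuine difference is in the construction and the key dimension lemma. The paper takes $u_e = \alpha_e e_i + \beta_e e_j$ with random coefficients and argues, via spanning trees and connected components, that almost surely $\tfrac{1}{2}|N_B(F)| \le \dim \mathrm{span}(P) \le |N_B(F)|$ (its parenthetical claim that the dimension equals $|N_B(F)|-r$ for $r$ components is only exact for forests -- a component containing a cycle generically has full rank -- but only the lower bound is used, so nothing breaks). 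You instead fix the coefficients to be $1$ and prove the exact identity $\dim \mathrm{span}\{u_e : e\in F\} = |V(F)| - b(F)$, the classical rank formula for the signless incidence matrix, with $b(F)$ the number of bipartite components; since every component of $(V(F),F)$ has at least two vertices this gives the same bound $\dim \ge |V(F)|/2$. What your route buys is a deterministic (rather than Las Vegas over coefficients) reduction and a sharper, unconditional rank statement; what the paper's randomized coefficients buy is that the lower bound follows from a one-line genericity argument without any case analysis on bipartiteness, and random coefficients place the points in a slightly more ``generic'' position, which is in the spirit of the general-position conditions used elsewhere in the paper even though the theorem as stated does not require it. The remaining arithmetic in your writeup (the range $\delta' \in (2\epsilon\delta, 2\delta] \subseteq [\delta\epsilon/2, 2\delta]$, and $\phi_G(S) = 1 - 2|E_G(S,S)|/(\Delta|S|) < 1-\epsilon$) checks out and matches the paper's contradiction.
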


\begin{proof}
Our reduction works as follows.
Let $G=(V,E)$ be an instance of {\sc Gap-Small-Set Expansion}.
Let $m=|E|$ and $n=|V|.$ 
For each edge $e=(i,j),$ create a vector $u_e= \alpha_e e_i + \beta_e e_j,$
where $e_i$ is the $i$-th standard basis vector and $\alpha_e,\beta_e$ are
drawn independently and uniformly at random from~$[0,1].$ This defines an
instance $u_1,\dots,u_m\in\R^n$ of {\sc Gap-Inliers}.

To analyze our reduction, it
will be helpful to consider the following intermediate graph. Let $B=(E,V)$ be
the bipartite graph where we connect each edge $e\in E$ with the two vertices
in~$V$ that it is incident to. Note that $B$ is $(2,\Delta)$-regular. The next
claim relates the dimension of a set of points to the size of the neighborhood
of the corresponding edge set in the graph~$B.$ Given a set $F \subseteq E$ we will
use $N_B(F)$ to denote the neighborhood of $F$. 

\begin{claim}
\label{dim-points}
For every set of points $P\subseteq\{u_1,\dots,u_m\}$ corresponding to a set of
edges $F\subseteq E$ we have with probability~$1$ over the choice of the
coefficients above
\[
\frac{|N_B(F)|}{2}  \leq \dim\left (\mathrm{span}(P)\right) \leq \left| N_B(F) \right|\mper
\]
\end{claim}
\begin{proof}
On the one hand, the points $P$ are contained in the
coordinate subspace of dimension~$d=|N_B(F)|$ corresponding to the union of the
support of the vectors. On the other hand, suppose the set $F$ is a spanning tree. Then
the vector corresponding to each edge is linearly independent of the others, and hence
the dimension of the subspace is precisely $|N_B(F)| - 1$ almost surely. Similarly, if $F$ corresponds to
$r$ connected components, then the dimension of the subspace is $|N_B(F)| - r$ almost surely, and
the number of connected components is at most $\frac{|N_B(F)|}{2}$ since there are no isolated vertices.
This implies the claim.
\end{proof}

\paragraph{Completeness.}
We begin with the completeness claim. Let $S\subseteq V$ be a set of measure~$\delta$ 
and suppose that $\phi_G(S)\le\epsilon.$ Double-counting the
edges spanned by $S,$ we get
\[
\Delta|S| = 2 |E_G(S,S)| + |E_G(S,V\backslash S)|\mper
\]
Hence, $|E(S,S)| \ge \Delta|S|/2 - \epsilon \Delta|S|/2 = (1-\epsilon)\Delta|S|/2.$ On the
other hand the edge set $E_G(S,S)$ has at most $|S|$ neighbors in $B.$ This
implies that the points corresponding to $E(S,S)$ are contained in a
coordinate subspace of dimension $|S|.$ Equivalently, there exists a $(\delta
n)$-dimensional subspace containing at least $(1-\epsilon)\delta\Delta n/2$
points. Since $m=\Delta n/2,$ this corresponds to a fraction of~$(1-\epsilon)\delta$
which is what we wanted to show.

\paragraph{Soundness.}
Next we establish soundness. Consider any set of points~$P$ contained
in $\delta n$ dimensions. We will show that under the given assumption on the
expansion profile of~$G,$ it follows that  $|P|\le \epsilon\Delta\delta
n.$ Again, since $m=\Delta n/2,$ this directly implies that any subspace of
dimension $\delta n$ contains at most a $2\epsilon\delta$ fraction of the
points. Let $F$ be the set of edges corresponding to $P$ in the graph $B$ and
let~$S$ be its vertex neighborhood in $B.$
Suppose (for the sake of contradiction) that
$|F| > \epsilon\Delta \delta n.$
First, note that the neighbor set $S\subseteq V$ of 
$F$ in the graph $B$ satisfies
\begin{equation}\label{size-S}
\frac{\delta \epsilon n}{2}\le |S| \le 2\delta n\mper
\end{equation}
The second inequality follows from Claim~\ref{dim-points}. The first
inequality follows because $G$ is $\Delta$-regular. Thus, a set of $|S|$
vertices can induce at most $\Delta|S|/2$ edges and all edges in $F$ are
induced by $S.$

Counting the edges touching $S$ as before, 
\[
\Delta|S|
= 2|E_G(S,S)| + |E_G(S,V\backslash S)| 
\ge 2|E_G(S,S)| + \Delta(1-\epsilon)|S|\mper
\]
The inequality followed from our assumption on the expansion profile of~$G$
which we may apply because $S$ satisfies Equation~\ref{size-S}.
Consequently:
\[
|E_G(S,S)|\le \frac{\epsilon\Delta|S|}{2}\mper
\]
On the other hand, $|F|\le|E_G(S,S)|,$ since every edge in $F$ is induced by
$S.$ Hence, the previous inequality showed that 
$|F| \leq \epsilon\Delta\delta n/2.$ and this is a contradiction, which completes the proof.
\end{proof}

\section{The Basis Polytope}\label{sec:radbp}

Here we connect the {\em independent set polytope} which has received considerable attention in matroid literature, to
a notion studied in functional analysis that we call {\em radial isotropic position}. In Section~\ref{sec:clust} we will use known
algorithms for deciding membership in the independent set polytope to derandomize our algorithm from Section~\ref{sec:simple}. 
And in Section~\ref{sec:computerip} we will give an efficient algorithm to compute radial isotropic position, which can be thought of
as a robust analogue to isotropic position. Let $A = [u_1,\dots,u_m]\in\R^{n\times m}$ with $m\ge n$. 

\begin{definition}
Let $P$ be the {\em independent set polytope} defined as:
$$P \defeq \conv \Big \{ \vec{1}_U \colon U\subseteq[m], 
\dim\big(\mathrm{span}\Set{u_i\colon i\in U}\big)=|U| \Big \} \mcom$$
where $\mathbf{1}_U$ is the $m$-dimensional indicator vector of the set $U.$ 
Also let $K_A$ be the {\em basis polytope} which is
the facet of $P$ corresponding to $\sum_i x_i = n$. 
\end{definition}

These polytopes can be defined (in a more general context) using the language of matroid theory 
where independent sets of vectors are replaced by independent sets in a matroid. 
A fundamental algorithmic problem in matroid theory is to give an efficient membership oracle
for these polytopes. A number of solutions are known which all follow from a characterization of Edmonds \cite{E}
that reduces membership to solving a submodular minimization problem:
$\min_{U \subset [m]} \mathrm{rank}( \{u_i \colon i \in U \}) - \sum_{i \in U}
x_i\mper$
The optimum value of this minimization is nonnegative if and only if $x \in P$
(\cite{E}). Hence an immediate consequence
of the known algorithms for submodular minimization \cite{GLS}, \cite{S}, \cite{IFF} and even a direct algorithm of \cite{C} yield:

\begin{theorem}
There is a deterministic polynomial time algorithm to solve the membership problem for the independent set polytope $P$ (and the basis polytope $K_A$). 
\end{theorem}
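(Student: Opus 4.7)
The plan is to invoke Edmonds' characterization (already stated in the excerpt just before the theorem) and reduce membership in $P$ to a single call to a submodular function minimization (SFM) routine. Concretely, given a candidate point $x \in \R^m$, I would define the set function $g_x \colon 2^{[m]} \to \R$ by
\[
g_x(U) = \mathrm{rank}\bigl(\{u_i : i \in U\}\bigr) - \sum_{i \in U} x_i\mper
\]
The first step is to verify that $g_x$ is submodular and polynomial-time evaluable: the rank function of a vector matroid is submodular (a standard fact), the modular term $\sum_{i\in U} x_i$ is trivially submodular, so their difference is submodular; and each evaluation $g_x(U)$ reduces to computing the rank of at most $m$ vectors in $\R^n$, which is done in polynomial time via Gaussian elimination.

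Next I would appeal to one of the known deterministic strongly polynomial-time algorithms for SFM --- Grötschel--Lovász--Schrijver via the ellipsoid method \cite{GLS}, or the combinatorial algorithms of Schrijver \cite{S} and Iwata--Fleischer--Fujishige \cite{IFF}, or Cunningham's matroid-specific algorithm \cite{C}. Any of these, applied to $g_x$ with the evaluation oracle above, returns $\min_{U \subseteq [m]} g_x(U)$ in time polynomial in $m$, $n$, and the bit-complexity of $x$ and the $u_i$. By Edmonds' theorem (stated in the excerpt), this minimum is nonnegative if and only if $x \in P$, so a single sign check decides membership in $P$.

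Finally, for the basis polytope $K_A$, I would observe that $K_A$ is the facet of $P$ cut out by the equation $\sum_i x_i = n$. Membership in $K_A$ therefore reduces to first checking the linear equality $\sum_i x_i = n$, and then invoking the membership oracle for $P$ constructed above. Both steps run in deterministic polynomial time, completing the argument.

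The main ``obstacle'' is not really conceptual but rather the use of a heavy black box: one has to quote a polynomial-time SFM algorithm, since there is no elementary way to decide membership in $P$ directly. Beyond that, the only thing to check carefully is that the oracle model for SFM matches what we have (evaluation of $g_x$ on any subset), which the rank computation handles. Hence the theorem follows essentially immediately from Edmonds' characterization combined with any one of the known SFM solvers.
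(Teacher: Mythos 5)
Your proposal is correct and follows essentially the same route as the paper: Edmonds' characterization reduces membership in $P$ to minimizing the submodular function $\mathrm{rank}(\{u_i : i \in U\}) - \sum_{i \in U} x_i$, which any of the cited deterministic submodular minimization algorithms (\cite{GLS}, \cite{S}, \cite{IFF}, or \cite{C}) solves in polynomial time, with $K_A$ handled by additionally checking $\sum_i x_i = n$. The paper treats this as an immediate consequence and gives no further detail, so your verification of submodularity and of the rank-evaluation oracle is simply a more explicit rendering of the same argument.
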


We will use this tool from matroid theory to derandomize our algorithm from Section~\ref{sec:simple}. Recall that the main step in our algorithm
is to repeatedly sample subsets of $n$ points and once we find one that is linearly dependent, we can use this subset to recover the set of inliers. 
So our approach is to use a membership oracle for the basis polytope to find a subset of $n$ points that is linearly dependent deterministically. 

The basis polytope not only plays a central role in robust subspace recovery but also in a notion studied in functional analysis called radial
isotropic position. These two concepts can be thought of as dual to each other: 
Recall that the set of vectors $u_1,\dots,u_m\in\R^n$ is in {\em isotropic position} if
\[\textstyle
\sum_{j=1}^m u_j\otimes u_j = \Id_n\mper
\]
 It is well-known that a set of points can be placed in isotropic position if and only if the points are not
all contained in an $n-1$-dimensional subspace. Just as isotropic position can be thought of as a certificate that a set of points is full-dimensional, so
too radial isotropic position can be thought of as a certificate that there is no low-dimensional subspace that contains many of the points.

\begin{definition}
We say that a linear transformation $R\colon\R^n\to\R^n$ puts set of vectors $u_1,\dots,u_m\in\R^n$ in \emph{radial isotropic
position} with respect to a coefficient vector 
$c\in\R^m$ if $\sum_{i=1}^m c_i \frac{Ru_i}{\|Ru_i\|} 
\otimes\frac{Ru_i}{\|Ru_i\|} = \mathrm{Id}_{n}.$
\end{definition}

If a set of vectors meets Condition~\ref{cond:general} then it cannot be put in radial isotropic position: any linear transformation $A$ preserves the invariant
that the inliers lie in a subspace of dimension $d$, but after applying $A$ and rescaling the points to be unit vectors the variance of a random sample restricted to this
subspace is strictly larger than $d$, which is too large! More generally, when can a set of vectors be put in radial isotropic position? \cite{Barthe98} gave a complete answer to this question: 




\begin{theorem}[Barthe]
A set of vectors $u_1,\dots,u_m\in\R^n$ can be put in \emph{radial isotropic position} 
with respect to
$c\in\R^m$ if and only if $c\in K_A$. 
Moreover, $c\in K_A$ if and only if the following supremum has finite value:
$\sup_{t\in\R^m} \langle c,t\rangle - \log\det\left(\sum_{i=1}^m
e^{t_i}u_i\otimes u_i\right).$
\end{theorem}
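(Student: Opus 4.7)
The plan is to analyze the concave variational functional
\[
f(t) \defeq \langle c, t\rangle - \log\det M(t), \qquad M(t) \defeq \sum_{i=1}^m e^{t_i}\,u_i\otimes u_i,
\]
which is exactly the objective in the second statement of the theorem. The first step is to observe that $t\mapsto\log\det M(t)$ is convex, so $f$ is concave. This follows directly from the Cauchy--Binet identity
\[
\det M(t) = \sum_{B}\det(U_B)^2\, e^{\langle \vec{1}_B,\, t\rangle},
\]
where $B$ ranges over $n$-subsets of $[m]$ and $U_B$ denotes the matrix of columns of $A$ indexed by $B$: the right-hand side is a log-sum-exp, which is convex in $t$.

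The second step links critical points of $f$ to radial isotropic position via the gradient identity $\partial_i f(t) = c_i - e^{t_i}\,u_i^T M(t)^{-1} u_i$. At a critical point $t^*$, setting $R = M(t^*)^{-1/2}$ yields $\|R u_i\|^2 = u_i^T M(t^*)^{-1} u_i = c_i\, e^{-t_i^*}$, and
\[
\sum_{i=1}^m c_i\,\frac{Ru_i}{\|Ru_i\|}\otimes \frac{Ru_i}{\|Ru_i\|}
= R\Paren{\sum_i e^{t_i^*}\,u_i\otimes u_i} R = R\,M(t^*)\,R = \mathrm{Id}_n,
\]
so an interior maximizer of $f$ immediately furnishes the desired transformation. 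Conversely, if $R$ realizes radial isotropic position with respect to $c\ge 0$, taking traces gives $\sum_i c_i = n$, and compressing the identity against the orthogonal projector onto $R\cdot\mathrm{span}\{u_i:i\in U\}$ yields $\sum_{i\in U} c_i \le \mathrm{rank}(\{u_i : i\in U\})$ for every $U\subseteq[m]$; these are exactly the Edmonds inequalities defining $K_A$, so $c\in K_A$.

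It remains to show the equivalence $c\in K_A \Leftrightarrow \sup_t f(t)<\infty$, which simultaneously delivers the maximizer needed in the forward direction. For the easy half, write $c = \sum_B \lambda_B \vec{1}_B$ as a convex combination of basis indicators, and apply the weighted entropy bound $\log \sum_B w_B \ge \sum_B \lambda_B(\log w_B - \log\lambda_B)$ to the Cauchy--Binet expansion of $\det M(t)$; this yields $\log\det M(t)\ge \langle c, t\rangle + C(\lambda)$ for a constant $C(\lambda)$, so $f\le -C(\lambda)$. In the other direction, if $c\notin K_A$ then one of the Edmonds constraints fails: either some $c_i<0$, or $\sum_i c_i\ne n$, or $\sum_{i\in U} c_i>\mathrm{rank}(\{u_i:i\in U\})$ for some $U$. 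In each case I would exhibit an explicit unbounded direction: shift $t$ by a multiple of $\vec{1}$ in the second case; send $t_i\to-\infty$ in the first (using that $\log\det M(t)$ remains bounded since $u_i$ lies in the span of the other $u_j$'s, so $M(t)$ stays non-singular); and move along $s\vec{1}_U$ in the third, using that $\log\det M(s\vec{1}_U) = s\cdot\mathrm{rank}(\{u_i:i\in U\})+O(1)$ as $s\to\infty$, so $f(s\vec{1}_U)\to+\infty$.

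The main obstacle is promoting boundedness of the supremum to attainment at an honest critical point. After quotienting out the one-dimensional null direction $t\mapsto t+s\vec{1}$ (on which $f$ is invariant precisely because $\sum c_i = n$), the escape-direction analysis above also shows that $f(t)\to-\infty$ in every remaining direction when $c$ lies in the relative interior of $K_A$, so concavity plus this coercivity delivers a maximizer there. For $c$ on the boundary of $K_A$ (for instance $c=\vec{1}_B$), I would approximate $c$ by a sequence $c^{(k)}$ of relative-interior points, produce the corresponding maximizers and transformations $R^{(k)}$, and pass to a subsequential limit; the active rank inequalities in the Edmonds description prevent the $R^{(k)}$ from degenerating, and the limit $R$ realizes radial isotropic position with respect to $c$.
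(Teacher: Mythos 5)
Most of your outline runs parallel to what the paper itself records in the appendix: your Cauchy--Binet/log-sum-exp convexity argument is Lemma~\ref{lem:convex}, your gradient computation and the identification of critical points with radial isotropic position is exactly Lemma~\ref{lem:opt} (Barthe's optimality conditions, reproduced in Section~\ref{sec:barthe}), and your escape-direction analysis is the same mechanism as Claim~\ref{claim:lemma} inside Lemma~\ref{yes} --- note, though, that the paper never proves the theorem itself; it cites \cite{Barthe98} and only ever needs the quantitative version for $c\in(1-\alpha)K_A$. Your additions are fine as far as they go: the weighted-AM/entropy bound gives finiteness of the supremum for \emph{all} of $K_A$ (not just the dilated polytope), the three explicit unbounded directions for $c\notin K_A$ are correct (the $c_i<0$ case needs a small patch: if $u_i$ lies in every basis then $\log\det M(t)\approx t_i+O(1)$ as $t_i\to-\infty$, and you instead use $c_i-1<0$; your stated justification that $u_i$ is in the span of the others need not hold), and the trace/projection argument that radial isotropic position forces the Edmonds inequalities is valid for $c\ge 0$. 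One smaller inaccuracy: for $c$ in the relative interior, $f$ need not tend to $-\infty$ in every direction transverse to $\vec{1}$; if the point configuration decomposes (every basis meets a subset $E_1$ in exactly $\mathrm{rank}(E_1)$ elements), then $f$ is \emph{invariant} along $\vec{1}_{E_1}$ as well, so you must quotient by all such directions before invoking coercivity.

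The genuine gap is the last paragraph, and it cannot be repaired as stated: the limiting argument for boundary $c$ fails, because for boundary points of $K_A$ a radial-isotropic transformation need not exist at all. Take $n=2$, $m=3$, $u_1=e_1$, $u_2=e_2$, $u_3=e_1+e_2$, and $c=(\tfrac12,\tfrac12,1)=\tfrac12\vec{1}_{\{1,3\}}+\tfrac12\vec{1}_{\{2,3\}}\in K_A$. If some $R$ satisfied $\tfrac12 v_1\otimes v_1+\tfrac12 v_2\otimes v_2+v_3\otimes v_3=\mathrm{Id}_2$ with $v_i=Ru_i/\|Ru_i\|$, then $\tfrac12(v_1\otimes v_1+v_2\otimes v_2)=\mathrm{Id}_2-v_3\otimes v_3$ would be a rank-one orthogonal projection, forcing $v_1=\pm v_2$, i.e.\ $Re_1\parallel Re_2$; but then all three $v_i$ are parallel and the left side has rank one, a contradiction. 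So along your approximating sequence $c^{(k)}\to c$ the maximizers $t^{(k)}$ escape to infinity and the (normalized) $R^{(k)}$ genuinely degenerate --- the ``active rank inequalities'' do not prevent this. What is true, and what your argument actually proves, is: (i) $\sup_t f(t)<\infty$ if and only if $c\in K_A$, and (ii) the supremum is attained --- equivalently a radial-isotropic $R$ exists --- when $c$ lies in the (suitably quotiented) relative interior of $K_A$, which is the only regime the paper uses (Lemma~\ref{yes} with slack $\alpha$, and $c=\frac{n}{m}\vec{1}$). The first ``if and only if'' in the statement, read literally with arbitrary boundary $c$, is therefore beyond what your proof (or any proof) can deliver, and you should either restrict that equivalence to the relative interior or interpret ``can be put in radial isotropic position'' in a limiting sense; as written, your final compactness step is the missing--and unfixable--link.
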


This concave maximization problem provides a connection between radial
isotropic position and robust subspace recovery.  The optimal value reveals to
us which case we are in: if it is finite, then the points can be put in radial
isotropic position but if it is infinite (under Condition~\ref{cond:general})
then there is a subspace $T$ of dimension $d$ that contains more than a
$\frac{d}{n}$ fraction of the points!

\section{A Deterministic Algorithm}\label{sec:clust}

In this section we apply tools from matroid theory (see \cite{E}, \cite{C}) to derandomize our algorithm from Section~\ref{sec:simple}. 
Recall that the main step in our algorithm from Section~\ref{sec:simple}
is to repeatedly sample subsets of $n$ points and once we find one that is linearly dependent, we can use this subset to recover the set of inliers. 
Our goal is to find such a subset deterministically, and we can think about this problem instead in terms of the basis polytope. 

Condition~\ref{cond:general} guarantees that the vector $\frac{n}{m} \vec{1}$ is outside the basis polytope. We remark that a set of $n$ columns 
is linearly dependent if and only if the indicator vector is outside the basis polytope. So we can think about this derandomization problem instead
as a rounding problem: we are given a vector $\frac{n}{m} \vec{1}$ that is outside the basis polytope and we would like to round it to a Boolean
vector (that sums to $n$) that is also outside the basis polytope. 

Our approach is simple to describe, and builds on known polynomial time membership oracles for the basis polytope developed within 
combinatorial optimization (see, \cite{E}, \cite{C}, \cite{GLS}, \cite{S},
\cite{IFF}). In each step
we find a line segment $\ell$ that contains the current vector (starting with $\frac{n}{m} \vec{1}$). Since the current vector is outside
the basis polytope it is easy to see that at least one of the
 endpoints of $\ell$ must also be outside. So we can move the current vector to this endpoint
 and if we choose these segments $\ell$ in an appropriate way we will quickly find a Boolean solution. 

Indeed, Edmonds gave a general characterization of the independent set polytope:
\begin{theorem}[\cite{E}]
The independent set polytope $P$ can equivalently be described as $P \defeq
\Big \{ x \in \R^m \colon \mbox{ for all } U \subset [m],
\dim\big(\mathrm{span}\Set{u_i\colon i\in U}\big) \geq \sum_{i \in U} x_i \Big
\}.$
\end{theorem}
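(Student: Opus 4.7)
The plan is to prove Edmonds' theorem by the standard greedy/LP-duality argument from matroid theory. Write $Q$ for the polyhedron on the right and set $r(U) \defeq \dim(\mathrm{span}\{u_i : i \in U\})$ for the linear-matroid rank function; to avoid triviality one should read $x \ge 0$ as implicit (otherwise large negative entries satisfy every stated inequality but fall outside $P$). The containment $P \subseteq Q$ is immediate at the generators: if $I \subseteq [m]$ is independent, then for every $U$ we have $\sum_{i \in U}(\mathbf{1}_I)_i = |I \cap U| \le r(U)$ because $I \cap U$ is itself independent; by convexity this extends to all of $P$.

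For the reverse containment I would argue that every vertex of $Q$ is an independent-set indicator and therefore lies in $P$. Fix any objective $c \in \R^m$ with nonnegative entries, sort so that $c_{i_1} \ge \cdots \ge c_{i_m} \ge 0$, and run the matroid greedy algorithm: scan $i_1, \ldots, i_m$ in order and add $i_k$ to $I$ exactly when $u_{i_k}$ lies outside the span of the previously kept vectors. Then $\mathbf{1}_I \in P \subseteq Q$, and, writing $U_k \defeq \{i_1,\ldots,i_k\}$ and $c_{i_{m+1}} \defeq 0$, the primal value equals $c^\top \mathbf{1}_I = \sum_k c_{i_k}\,(r(U_k) - r(U_{k-1}))$ since $r$ jumps by one precisely at greedily-kept indices. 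To match this with duality I would assign dual variables $y_{U_k} = c_{i_k} - c_{i_{k+1}} \ge 0$ to the prefixes and $y_T = 0$ otherwise: each $i_k$ lies in $U_k,\ldots,U_m$ alone among sets with nonzero weight, so $\sum_{T \ni i_k} y_T = c_{i_k}$ and the dual is feasible with equality. An Abel summation rearranges the dual objective $\sum_k r(U_k)(c_{i_k} - c_{i_{k+1}})$ into the primal value displayed above, so by strong LP duality $\mathbf{1}_I$ is an optimal extreme point. Since a generic $c$ in the normal cone of any given vertex of $Q$ singles out that vertex as the unique maximizer, every vertex of $Q$ is an independent-set indicator, finishing $Q \subseteq P$.

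The main technical content is the correctness of greedy, i.e., that the dual variables I constructed really do certify optimality. This reduces to submodularity (equivalently the exchange axiom) of $r$, which for the linear matroid on $u_1,\ldots,u_m$ is immediate from the dimension formula $\dim(A+B) + \dim(A \cap B) = \dim A + \dim B$. Given submodularity, the telescoping and the equality of primal and dual values go through with no surprises, and I would invoke \cite{E} for this step rather than reprove it in detail.
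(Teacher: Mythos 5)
The paper does not prove this statement at all---it is quoted from Edmonds \cite{E}---so there is no internal proof to compare against; your greedy/LP-duality route is the classical one (essentially Edmonds' own). Your observation that $x\ge 0$ must be read into the description is correct (the paper's later Corollary~5.3 indeed carries the constraints $0\le x_i\le 1$ explicitly), the containment $P\subseteq Q$ is fine, and your prefix duals $y_{U_k}=c_{i_k}-c_{i_{k+1}}$ are feasible with matching objective value, so for every nonnegative $c$ you have correctly shown that $\max_{x\in Q}\langle c,x\rangle=\max_{x\in P}\langle c,x\rangle$ and that this value is attained at an independent-set indicator. (You only need weak duality here, since you exhibit equal primal and dual values.)

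The gap is the final inference. From ``for every generic nonnegative $c$ the unique maximizer over $Q$ is an indicator'' you conclude ``every vertex of $Q$ is an indicator.'' But not every vertex of $Q$ has a nonnegative generic vector in its normal cone, so your argument never sees such vertices. The phenomenon is real, not hypothetical: once the theorem is known, the indicator of a non-maximal independent set is a vertex of $Q=P$, yet it is never the unique maximizer of any nonnegative objective, because enlarging the independent set cannot decrease such an objective; so, a priori, a fractional vertex of this ``invisible'' kind could exist and your argument would not exclude it. Standard repairs: (i) run greedy and the dual construction for arbitrary $c\in\R^m$, letting greedy skip indices with $c_i\le 0$ and using the dual multipliers of the constraints $x_i\ge 0$ to absorb the negative part of $c$; or (ii) observe that both $P$ and $Q$ are down-monotone in the nonnegative orthant (if $0\le y\le x$ and $x$ lies in the set, so does $y$), so equality of support functions on nonnegative $c$ already forces $P=Q$; or (iii) for a vertex $v$ of $Q$ and generic $c$ in its normal cone, note $v_i=0$ whenever $c_i<0$, restrict to the face $\Set{x\in Q\colon x_i=0 \mbox{ for } c_i<0}$, which is the analogous polyhedron for the subfamily of vectors, and apply your nonnegative-$c$ argument there. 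Any of these closes the argument; as written, $Q\subseteq P$ is not established.
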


Hence we can intersect this alternative description of $P$ with the constraint $\sum_i x_i = n$ to obtain an alternative description of the basis polytope that will be more
convenient for our purposes. Indeed, if Condition~\ref{cond:general} is met then any subset $U$ of points has rank equal to $\min(n, \min(|U \cap L|, d) + |U / L|)$ and so:

\begin{corollary}\label{lemma:poly}
If a set of $m \geq n$ points meets Condition~\ref{cond:general}, then 
\[\textstyle
P = \Big \{x \in \R^{m} \colon 0 \leq x_i \leq 1,  \sum_{i=1}^m x_i \leq n \mbox{ and }
\sum_{i \in L} x_i \leq d \Big \}
\]
\[\textstyle
K_A = \Big \{ x \in \R^{m} \colon 0 \leq x_i \leq 1, \sum_{i=1}^m x_i = n \mbox{ and }
\sum_{i \in L} x_i \leq d \Big \}
\]
\end{corollary}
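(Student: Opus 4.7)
The plan is to start from Edmonds' description of $P$ and show that, under Condition~\ref{cond:general}, all but two of the rank inequalities become redundant once the box constraints $0 \leq x_i \leq 1$ are imposed.

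First I would make the rank function explicit. Under Condition~\ref{cond:general} (together with the general position assumptions stated at the start of Section~1.1), a subset $U \subseteq [m]$ with $k = |U \cap L|$ inliers and $\ell = |U \setminus L|$ outliers spans a subspace of dimension
\[
r(U) \;=\; \min\bigl(n,\; \min(k,d) + \ell\bigr).
\]
Indeed, the inliers in $U$ span a subspace of $T$ of dimension $\min(k,d)$, and since the outliers are in general position with respect to $T$ each outlier that is added increases the dimension by one until the ambient dimension $n$ is reached.

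Next I would verify that the two inequalities $\sum_i x_i \leq n$ and $\sum_{i \in L} x_i \leq d$, combined with $0 \leq x_i \leq 1$, imply Edmonds' inequality $\sum_{i \in U} x_i \leq r(U)$ for every $U \subseteq [m]$. There are three cases. If $\min(k,d) + \ell \geq n$, then $r(U) = n$ and the inequality follows from $\sum_{i \in U} x_i \leq \sum_{i=1}^m x_i \leq n$. If $k \leq d$, then $r(U) = k + \ell$ and the inequality follows from the box constraints alone, since $\sum_{i \in U \cap L} x_i \leq k$ and $\sum_{i \in U \setminus L} x_i \leq \ell$. Finally if $k > d$ and $d + \ell < n$, then $r(U) = d + \ell$, and we combine $\sum_{i \in U \cap L} x_i \leq \sum_{i \in L} x_i \leq d$ with $\sum_{i \in U \setminus L} x_i \leq \ell$. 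Conversely, both of these two inequalities are instances of Edmonds' description, taking $U = [m]$ (where $r([m]) = n$ since the points span $\R^n$) and $U = L$ (where $r(L) = d$ by Condition~\ref{cond:general}).

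Together with the nonnegativity and $\leq 1$ constraints, this gives the claimed description of $P$. The description of $K_A$ then follows immediately from the definition of $K_A$ as the facet of $P$ on which $\sum_i x_i = n$. The only real work is the three-case check above, which is entirely elementary.
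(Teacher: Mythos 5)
Your proposal is correct and follows the same route as the paper: the paper likewise invokes Edmonds' description of $P$, notes that Condition~\ref{cond:general} forces the rank function to be $\min\bigl(n,\min(|U\cap L|,d)+|U\setminus L|\bigr)$, and then reads off the two polytope descriptions (intersecting with $\sum_i x_i = n$ for $K_A$). Your three-case redundancy check simply makes explicit the elementary verification the paper leaves to the reader.
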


\begin{fragment*}[t]
\caption{
\label{alg:reduce}{\sc DerandomizedFind}\\
\textbf{Input: } $A\in\R^{n\times m}$ which satisfies Condition~\ref{cond:general} \vspace*{0.01in}
}

\begin{enumerate} \itemsep 0pt
\small 
\item Set $U = [m]$
\item While $|U| > n $
\item $\qquad$ For each $i \in U$
\item $\qquad$ $\qquad$ Check if $\frac{n}{|U \backslash \{i\}|} \vec{1} \in
K_{A_{U \backslash \{i\}}}$
\item $\qquad$ $\qquad$ If `NO', Set $U = U \backslash \{i\}$ (exit for loop)
\item Find $u \in ker(A_U)$, Set $T = span(\{A_i \colon u_i \neq 0 \})$, Return $L = \{i \colon A_i \in T\}$
\end{enumerate} 

\end{fragment*}

\begin{lemma}\label{lemma:nomult}
After exiting the while loop, $|U \cap L | \geq d + 1$
\end{lemma}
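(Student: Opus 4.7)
The plan is to identify a simple numerical invariant on $U$ that is preserved throughout the while loop, and then read off the conclusion at termination.

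First I would translate the polytope test into an elementary counting condition via Corollary~\ref{lemma:poly}. The same condition applies to any column-restriction $A_{U'}$ since Condition~\ref{cond:general} is obviously inherited by column subsets. For any $U' \subseteq [m]$ with $|U'| \geq n$, the candidate vector $\frac{n}{|U'|}\vec{1}$ automatically satisfies the box constraints $0 \leq x_i \leq 1$ and the equation $\sum_i x_i = n$, so the only constraint that can fail is $\sum_{i \in L \cap U'} x_i \leq d$. Hence the test $\frac{n}{|U'|}\vec{1} \in K_{A_{U'}}$ returns \textsc{no} if and only if $|L \cap U'|/|U'| > d/n$.

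Next I would maintain the invariant that at every entry to the while condition, $|L \cap U|/|U| > d/n$. The base case $U = [m]$ is precisely the hypothesis of the theorem. For the inductive step, assume $|U| > n$ and the invariant holds; I claim the for loop always succeeds in removing some element while preserving the invariant. If $U \not\subseteq L$, pick any outlier $i \in U \setminus L$: then $|L \cap (U \setminus\{i\})| = |L \cap U|$ while $|U \setminus \{i\}| = |U| - 1$, so the ratio strictly increases and the test on this $i$ returns \textsc{no}. If instead $U \subseteq L$, then for any $i \in U$ the ratio after removal is $1 > d/n$, so again the test returns \textsc{no}. Either way the for loop selects some $i$, and the invariant propagates.

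Thus $|U|$ decreases by exactly one per iteration and the loop terminates when $|U| = n$. At that point the invariant gives $|L \cap U| > (d/n)\cdot n = d$, hence $|L \cap U| \geq d+1$, which is the desired conclusion. The only point that could require care is verifying that the hypotheses of Corollary~\ref{lemma:poly} continue to hold for each submatrix $A_{U'}$ encountered by the algorithm, but this is automatic; once that observation is made, the argument is pure bookkeeping and no genuine obstacle remains.
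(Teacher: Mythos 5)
Your proof is correct and follows essentially the same route as the paper's: translate the membership test via Corollary~\ref{lemma:poly} into the condition that the inlier fraction exceeds $\frac{d}{n}$, maintain this as a loop invariant, and observe the loop cannot get stuck because removing an outlier (or, if none remain, any inlier) keeps the fraction above $\frac{d}{n}$. At termination $|U|=n$ with inlier fraction strictly above $\frac{d}{n}$ gives $|U\cap L|\geq d+1$, exactly as in the paper.
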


\begin{proof}
An immediate consequence of Corollary~\ref{lemma:poly} is that for each call to the membership oracle for $K_V$ for some set $V$, the answer is `NO' if and only if the fraction of inliers in $V$ is more than $\frac{d}{n}$. Since at the start of the while loop we are guaranteed that the fraction of inliers in $U$ is more than $\frac{d}{n}$, this is an invariant of the algorithm. All that remains is to check that for any set $U$ with $|U| > n$ and more than a $\frac{d}{n}$ fraction of inliers, there some element $i$ that we can remove from $U$ to maintain this condition (i.e. the algorithm does not get stuck). This is easy to check since if $U$ contains even just one outlier, we can choose $i$ to be that element and this will only increase the fraction of inliers and if instead there are no outliers left then we can choose any inlier to remove. Hence the algorithm does not get stuck, outputs a set $U$ with $|U| = n$ which has strictly more than a $\frac{d}{n}$ fraction of inliers and so $|U \cap L | \geq d + 1$. 
\end{proof}

\begin{theorem}
Given a set of $m$ points $u_1, \dots,u_m \in\R^n$ with $m \geq n$ that meets Condition~\ref{cond:general} and which $T$ contains more than a $\frac{d}{n}$ fraction of the points, then 
 {\sc DerandomizedFind} computes $T$. The running time of this algorithm is bounded by a fixed polynomial in $n$, $m$.
\end{theorem}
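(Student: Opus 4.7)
The plan is to combine the invariant established in Lemma~\ref{lemma:nomult} with the kernel-based recovery step that was already used in Theorem~\ref{thm:find}, and then account for the number of membership-oracle calls. Almost all the work has been done in the preceding lemmas; what remains is to tie them together and bound the running time.

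First I would argue that the while loop terminates with a set $U$ of size exactly $n$ in which strictly more than a $d/n$ fraction of the points are inliers. Lemma~\ref{lemma:nomult} already verifies this: the invariant ``$U$ has more than a $d/n$ fraction of inliers'' is maintained throughout, and the loop can always find some $i\in U$ whose removal preserves the invariant (either an outlier, which strictly increases the inlier fraction, or, if no outliers remain, any inlier). Since $|U|$ strictly decreases in every iteration, the loop terminates with $|U|=n$ and $|U\cap L|\ge d+1$.

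Next I would invoke Condition~\ref{cond:general} and Claim~\ref{claim:support}. By Condition~\ref{cond:general}, the set $U$ of $n$ columns containing at least $d+1$ inliers is linearly dependent, so $\ker(A_U)$ is non-trivial and the algorithm can find some $u\in\ker(A_U)\setminus\{0\}$ in polynomial time (for example by Gaussian elimination). By Claim~\ref{claim:support} the support of $u$ consists only of inliers, and must contain at least $d+1$ of them. These $d+1$ inliers lie in $T$ and, by the ``in general position with respect to $T$'' half of Condition~\ref{cond:general}, span all of $T$; in particular $\cL=\mathrm{span}\{A_i:u_i\ne 0\}=T$. Hence the final set $L=\{i:A_i\in \cL\}$ returned by the algorithm is exactly the set of inliers, and $T$ is recovered.

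Finally, for the running time I would note that each iteration of the while loop strictly decreases $|U|$, so the while loop executes at most $m-n$ times. Each iteration performs at most $|U|\le m$ membership queries of the form $\tfrac{n}{|U\setminus\{i\}|}\vec{1}\in K_{A_{U\setminus\{i\}}}$, and by the theorem cited in Section~\ref{sec:radbp} each such query runs in deterministic polynomial time via submodular minimization \cite{E,C,GLS,S,IFF}. The concluding kernel computation and span extraction are standard linear algebra. Summing, the total work is bounded by a fixed polynomial in $n$ and $m$. The only step that is not essentially immediate from the preceding material is verifying that the inner for loop is always able to find some $i$ to drop without leaving the ``more than $d/n$ fraction inliers'' region, and this was exactly the point handled in Lemma~\ref{lemma:nomult}.
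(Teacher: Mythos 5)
Your proposal is correct and follows essentially the same route as the paper's proof: invoke Lemma~\ref{lemma:nomult} to get $|U\cap L|\ge d+1$ after the while loop, use Condition~\ref{cond:general} and Claim~\ref{claim:support} to recover $T$ and the inlier set from a kernel vector of $A_U$, and cite the strongly polynomial membership oracles for $K_A$ for the running time. You simply spell out the termination count and the span argument in more detail than the paper does.
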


\begin{proof}
Since $|U \cup L| \geq d + 1$, we have that $rank(A_U) < n$. Then using Claim~\ref{claim:support}, {\sc DerandomizedFind} computes the span $T$ of the inliers, and outputs exactly the set of inliers. Note that there are a number of known strongly polynomial time algorithms for deciding membership in $K_A$ (see Section~\ref{sec:radbp}). 
\end{proof}

\remove{
\subsection{Subspace Clustering}

Here, we consider a more general problem in which we are given a matrix $A\in\R^{n\times m}$ with $m\ge n$ and rank $n,$ and suppose that there are subspaces $\cL_1, \cL_2, \dots \cL_k$ of dimension $d_1, d_2, \dots d_k$ respectively and that each contains {\em strictly} more than a $\frac{d_j}{n}$ fraction of the columns of $A$. Our goal is to recover each of these subspaces (again under mild general position conditions on these points). We call {\em any} columns of $A$ contained in some $\cL_j$ the {\em inliers}, and the remaining columns are {\em outliers}. Let $L_1, L_2, ... L_k \subset [m]$ be the columns of $A$ that are inliers. 

\begin{condition}\label{cond:general2}
A set of $n$ columns of $A$ is linearly independent if and only if for each $j$, at most $d_j$ columns are in $L_j$. 
\end{condition}

Then the set of linearly independent columns of $A$ are the independent sets of a {\em partition matroid}. This condition also holds almost surely, in any reasonable stochastic process that generates the points in each $L_j$ as a random sample from some distribution on $\cL_j$ and each outlier from some distribution on $\R^n$. Note that the above condition implies that the sets $L_j$ are disjoint. Throughout this section, we will use $i$ to denote columns in $A$ and $j$ to denote one of the $k$ subspaces. 

Here, we will need a characterization of the basis polytope of $A$. The following Lemma is a generalization of Lemma~\ref{lemma:poly}. 

\begin{lemma}
If a set of $m \geq n$ points meets Condition~\ref{cond:general2}, then 
$$K_A = \{ c \in \R^{m} \colon 0 \leq c_i \leq 1, \sum_i c_i = n \mbox{ and for all $j$ } \sum_{i \in L_j} c_i \leq d_j\}$$
\end{lemma}

\begin{proof}
Let $P = \{ c \in \R^{m} \colon 0 \leq c_i \leq 1, \sum_i c_i = n \mbox{ and for all $j$ } \sum_{i \in L_j} c_i \leq d_j\}$. It is easy to see that $K_A \subset P$ (e.g. see Lemma~\ref{lemma:poly}). Our goal is to prove the reverse inclusion, and to do this we will give a more general proof than in Lemma~\ref{lemma:poly}. Again, we will find a set $J$ of $n$ columns that correspond to a basis such that $\frac{1}{1-\lambda}(c - \lambda\vec{1}_{J}) \in P$. 

For each $j$, if $\sum_{i \in L_j} c_i = d_j$, then set $J_j$ to be the $d_j$ $i \in L_j$ for which $c_i$ is the largest. And if instead $\sum_{i \in L_j} c_i < d_j$, set $J_j$ to be the set of all $i \in L_j$ for which $c_i = 1$. Also let $O$ be the set of outliers - i.e. $[m] / \cup_j J_j$ and let $J_{j+1}$ be the set of all $i \in O$ for which $c_i = 1$. Hence $c_i = 1$ implies that $i \in \cup_j J_j$. 

Let $J = \cup_j J_j$. Note that $|J| \leq n$. If $|J| < n$, then successively add $i$ to $J$ which $c_i$ is a non-zero entry not already in $J$. It is easy to see that $J$ is an independent set of columns. Furthermore once a constraint $\sum_{i \in L_j} c_i \leq d_j$ becomes tight it remains tight; and furthermore once a coordinate becomes $\{0,1\}$ it cannot change. Hence for all but at most $k$ iterations of the above procedure, we strictly increase the number of $\{0,1\}$-valued coordinates. So this procedure terminates with an integral vector in $P$ which must then also be in $K_A$. 
\end{proof}

It is easy to see that Lemma~\ref{lemma:nomult} proves more generally that the output of {\sc Reduce} is a set of $U$ for which for some $j$, $|U \cap L_j| \geq d_j + 1$ because the calls to the separation oracle for $K_A$ will output correct answers for the chosen $\alpha$, and moreover {\sc Reduce} will maintain the invariant that for some $j$, the fraction of points in $U$ contained in $L_j$ is strictly more than $\frac{d_j}{n}$ (note that throughout the algorithm, which $j$ satisfies this condition may change). 

Next, we give a modification of {\sc FindInliers} that succeeds in finding at least one $L_j$ for the robust subspace clustering problem. 

\begin{fragment*}[t]
\caption{
\label{alg:fi2}{\sc FindInliers2}\\
\textbf{Input: } $U$ with $|U| \in \{n + 1, n+2\}$ and for some $j$ $|U \cap L_j| \geq d_j + 1$ \vspace*{0.01in}
}

\begin{enumerate} \itemsep 0pt
\small 
\item Run {\sc Reduce}
\item For all $V \subset U$ with $|V| = n$
\item $\qquad$ If $rank(A_V) < n$ exit For
\item Solve $\min \| u \|_1$ such that $u \in ker(A_V)$, Set $\cL = span(\{A_i \colon u_i \neq 0 \})$, Set $L = \{i \colon A_i \in \cL\}$
\item Output $L$
\end{enumerate} 

\end{fragment*}

\begin{lemma}
The output of {\sc FindInliers2} is $L_j$, for some $j$. 
\end{lemma}

\begin{proof}
Using Condition~\ref{cond:general2}, a set $V$ will have $rank(A_V) < n$ if and only if $V$ contains at least $d_{j'} + 1$ columns from $L_{j'}$ for some $j'$, and furthermore there is a choice of $V$ that will satisfy $rank(A_V) < n$. We can apply Claim~\ref{claim:support} and hence the support of $u$ will contain no outliers, and will contain at least $d_{j'} + 1$ inliers for some $j'$.  However if the support of $u$ contains at least $d_{j'} + 1$ inliers from $L_{j'}$ and at least $d_{j''} + 1$ inliers from $L_{j''}$, then we can reduce the $\ell_1$ norm of $u$ (by setting the coordinates corresponding to say $L_{j''}$ to zero) and still find a vector in the kernel of $A_V$. Hence, the span of the columns in the support of $u$ is exactly $\cL_{j'}$ for some $j'$ (not necessarily equal to $j$). 
\end{proof}

\begin{theorem}
Given a set of $m$ points $u_1, \dots,u_m \in\R^n$ with $m \geq n$ for which:
\begin{enumerate}
\item there are subspaces $\cL_1, \cL_2, \dots \cL_k$ which contain {\em strictly} more than a $\frac{d_j}{n}$-fraction of the points respectively and
\item a set $I$ of $n$ points is linearly independent if and only if $|I \cap \cL_j| \leq d_j$ for each $j$
\end{enumerate}
then there is an efficient algorithm to compute $\cL_1, \cL_2, \dots \cL_k$. The running time of this algorithm is bounded by a fixed polynomial in $n$, $m$ and $\log \frac 1 D$, where $D$ is the smallest non-zero determinant of a matrix formed by any set of $n$ vectors. 
\end{theorem}

\begin{proof}
We can run {\sc FindInliers2} on the output of {\sc Reduce}, and this finds a set $L_j$ of inliers for $\cL_j$ for some $j$. We can remove this set of columns from $A$ and recurse. Note that the fraction of points that are inliers in $L_{j'}$ for $j' \neq j$ can only increase in the next invocation of the algorithm. We can continue this procedure until either the remaining set of columns can be put in radial isotropic position (in which case we have found all $L_j$), or if the number of points is strictly less than $n$. In this case if we have not found all $L_j$, there is a linear dependence among the remaining columns $A_V$ and we can run the 'Solve' step in {\sc FindInliers2} until we have found all remaining subspaces. 
\end{proof}

}

\bibliographystyle{moritz}
\bibliography{robust}

\appendix

\section{Barthe's Convex Program}\label{sec:barthe}
Recall that the basis polytope $K_A$ characterizes exactly when we can put
a set of points in radial isotropic position (\cite{Barthe98}). There are several
known algorithms from the matroid literature that provide a strongly polynomial time
algorithm for deciding membership in $K_A.$ However, the focus of this section and the
next is not just deciding if the optimization problem of Barthe has finite or infinite value,
but finding an optimal solution in case that the optimum is finite. From the solution to this
optimization problem, we will be able to derive the linear transformation that
places a set of points in radial isotropic position. 
Here we will explain in detail the connection found by \cite{Barthe98} and 
\cite{CarlenLiLo04,CarlenCo08} between convex programming and radial isotropic position. In the 
next section we will prove various effective bounds on this convex programming problem that
we need in order to show that the Ellipsoid method finds an optimal solution. 


Recall that Barthe considers maximizing a concave function (or equivalently minimizing a convex function):
\[
\sup_{t_1,\dots,t_m\in\R} \langle c,t\rangle - \log\det\left(\sum_{i=1}^m
e^{t_i}u_i\otimes u_i\right)
\]
for a given set of points $u_1,\dots,u_m\in\R^n$ and a coefficient vector
$c\in\R^m.$
How is this unconstrained maximization problem related to the linear
transformation that puts the points $u_1,\dots,u_m$ into radial isotropic
position?  For now we specialize our discussion to the case in which $c =
\frac{n}{m}\vec{1}$, where $\mathbf{1}$ is the all ones vector. Let
$t_1,\dots,t_m\in\R.$ Consider the matrix $U=\sum_{j=1}^me^{t_j}u_j\otimes
u_j.$ We know that this matrix is positive definite and has full rank.
Therefore it has a symmetric positive definite square root and we can define
$R=U^{-1/2}.$
Notice that 
$$\Id_n=U^{-1/2}UU^{-1/2}=\sum_{j=1}^m e^{t_j} Ru_j\otimes Ru_j$$
Hence, we have what we need if we can choose $t_j$ such that 
$e^{t_j} = \frac nm \|Ru_j\|^{-2}\mper$
The crucial insight is that these conditions are exactly the optimality conditions
in Barthe's maximization problem. 

\begin{lemma}[\cite{Barthe98}]\label{lem:opt}
Let $A=[u_1,\dots,u_m]$ denote a matrix with column vectors
$u_1,\dots,u_m\in\Rn.$ Suppose $\phi_A^*(c)<\infty.$ 
Then, any optimal solution $t_1,\dots,t_m$ to $\phi_A^*(c)$ 
satisfies $c_j = \langle e^{t_j}u_j, (Ae^TA^*)^{-1}u_j\rangle $
for every $1\le j\le m.$
\end{lemma}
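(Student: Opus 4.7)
The plan is to recognize the identity as the first-order optimality condition for the unconstrained concave maximization problem defining $\phi_A^*(c)$. Introduce the shorthand
$$F(t) \;:=\; \langle c, t\rangle - \log\det M(t), \qquad M(t) \;:=\; \sum_{i=1}^m e^{t_i} u_i\otimes u_i \;=\; A e^T A^*,$$
where $e^T$ denotes the diagonal matrix with entries $e^{t_i}$. Since $A$ has rank $n$, $M(t)$ is positive definite for every $t\in\R^m$, so $F$ is smooth on all of $\R^m$. Given an optimal $t^* = (t_1^*,\dots,t_m^*)$, I just need to unpack the condition $\nabla F(t^*)=0$.

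The key calculation is the partial derivative of $\log\det M(t)$ with respect to a single coordinate $t_j$. Using the standard matrix calculus identity $\partial \log\det M = \mathrm{tr}(M^{-1}\,\partial M)$ together with $\partial M/\partial t_j = e^{t_j}\,u_j\otimes u_j$, I get
$$\frac{\partial}{\partial t_j}\log\det M(t) \;=\; e^{t_j}\,\mathrm{tr}\!\left(M(t)^{-1} u_j\otimes u_j\right) \;=\; \langle e^{t_j} u_j,\; M(t)^{-1} u_j\rangle.$$
Therefore $\partial F/\partial t_j(t) = c_j - \langle e^{t_j} u_j, M(t)^{-1} u_j\rangle$, and setting this to zero at $t=t^*$ yields exactly the claimed equation $c_j = \langle e^{t_j^*} u_j, (A e^{T^*} A^*)^{-1} u_j\rangle$ for every $j$.

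The substantive content is therefore just (i) ensuring the objective is smooth so that first-order conditions apply at any interior optimum, and (ii) justifying that an optimum exists at all when $\phi_A^*(c)<\infty$. Point (i) is immediate from rank$(A)=n$ as noted above. Point (ii) is the main obstacle and is not completely trivial: $F$ is concave (a Hessian calculation shows $-\log\det M(t)$ is convex in $t$ in this exponential parameterization), but it is also translation-invariant along $\vec{1}$ when $\sum_i c_i = n$, so the supremum is never attained at a single point---one must quotient by $\mathrm{span}(\vec{1})$ and argue coercivity on the quotient to obtain an optimizer whose equivalence class supplies the desired $t^*$. Since the statement of the lemma already hypothesizes an optimal $(t_1,\dots,t_m)$, I would defer this existence question to Barthe and present only the gradient calculation above as the proof.
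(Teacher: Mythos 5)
Your proposal is correct and follows essentially the same route as the paper, which also proves the identity by computing $\partial_{t_j}\log\det(Ae^TA^*) = e^{t_j}\langle u_j,(Ae^TA^*)^{-1}u_j\rangle$ via the $\log\det$ derivative formula and the cyclic trace identity, and setting the gradient to zero (the paper just specializes the notation to $c=\frac{n}{m}\vec{1}$). Like you, the paper defers attainment of the supremum to Barthe, so your treatment matches in both substance and scope.
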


For completeness, we present Barthe's proof and to simplify notation we will continue
specializing our discussion to $c = \frac{n}{m} \vec{1}$. Consider maximizing the function $f$ over $\Rm$ defined
as:
$$f(t)=\frac nm\sum_{j=1}^m t_j - \log\det U$$
It is not hard to show that $f$ is concave (a short proof is given
in Lemma \ref{lem:convex}).
What is crucial is that if $t$ maximizes $f(t)$, then it must satisfy that 
the gradient of $f$ at $t$ vanishes. 
We can apply a well-known formula for the derivative of $\log\det$
(see e.g. \cite{Lax07}) and:
\[
0 = \frac{\partial f(t)}{\partial t_j}
= \frac nm - \tr\left( U^{-1}\frac{\partial U}{\partial t_j}\right)
\mper
\]
Also in our case:
\[
\frac{\rd Ae^TA^*}{\rd t_j}=\sum_{j=1}^m\frac{\rd e^{t_j}u_j\otimes u_j}{\rd
t_j}
= e^{t_j}u_j\otimes u_j\mper
\]
And so we conclude that the optimality condition is for all $j \in [m]$:
\[
0 = \frac nm - \tr\left( U^{-1} e^{t_j} u_j\otimes u_j\right)
= \frac nm - e^{t_j}\langle u_j,U^{-1}u_j\rangle\mper
\]
where the last step uses the identity $\tr(ABC) = \tr(BCA)$. Recall that $\langle u_j,U^{-1}u_j\rangle=\|Ru_j\|^2$
and so any optimal $t\in\Rm$
satisfies $e^{t_j} = \frac nm \|Ru_j\|^{-2}$ which is precisely the condition
we needed. 
Additionally, Barthe proves that the supremum of $f(t)$ is attained (this is a central step in the proof, which depends on the
particular $f$ given above). In order to give an algorithm for computing radial isotropic position, we will also need to
reason about $f$ and give effective bounds on the region where its supremum can be.

\section{Computing Radial Isotropic Position Efficiently}
\label{sec:sep}
\label{sec:efficient}
\label{sec:computerip}

Here we prove two important properties of the convex programming problem considered by Barthe,
that we will need in order to prove that the Ellipsoid method can solve it. We prove that if the optimum
is finite, there is a solution in a bounded region that is optimal. Also we establish a lower bound on how strictly convex the objective function is,
since we will need this to show that any candidate solution that is close enough to achieving the optimum value must
also be close to the optimum solution.

\subsection{Effective Bounds}

Here we prove bounds on the region in which an optimal solution can be found.
%
%
Our proof follows the same basic outline as in~\cite{BrascampLi76,Barthe98,CarlenLiLo04} but is self-contained.
We define $\phi_A\colon\Rm\to\mathbb{R}$ as
\[
\phi_A(t_1,\dots,t_m)= \log\det\left(\sum_{j=1}^m e^{t_j} u_j\otimes u_j\right)
\]
Given $c\in\Rm$, consider the optimization problem $\phi_A^*(c)
= \sup_{t\in\mathbb{R}^m}
\langle t,c\rangle - \phi_A(t_1,\dots,t_m).$
The function $\phi_A^*$ is the Legendre transform of $\phi_A.$ 
For convenience, we will write 
$\log\det (\sum_{j=1}^m e^{t_j} u_j\otimes u_j )
=\log\det(Ae^TA^*)$
where $T$ denotes the diagonal matrix with entries $t_1,\dots,t_m$ and 
$A^*$ is the transpose of~$A$ and $e^T$ denotes the matrix exponential of $T$ (i.e. a diagonal matrix in which
the $i^{th}$ entry on the diagonal is $e^{t_i}$). We also introduce the notation:

\begin{definition}
Let $d_I=\det (A_IA_I^*)$, $t_I=e^{\sum_{j\in I}t_j}$ and $D = \min_{I\colon d_I\ne 0} d_I$, where $A_I$ is the sub matrix whose columns are indexed by $I$.
\end{definition}

 When we use the subscript $I$ without further specification, we
will always mean a subset of $[m]$ of size $n.$ We will make repeated use of the Cauchy-Binet formula in this section:

\begin{fact}
Let $A, B^* \in \R^{m\times n}$. Then $\det(A B) = \sum_{I \colon |I| = n} \det(A_I) \det(B^*_I)$. 
\end{fact}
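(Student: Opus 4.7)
The statement is the Cauchy--Binet formula, a classical identity that the paper invokes as a ``fact'' for later use. My plan is to prove it via multilinearity of the determinant in the columns (equivalently, rows) of $AB$, which is the most transparent route and avoids any auxiliary block-matrix machinery.

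First, I would write the $j$-th column of $AB$ as a linear combination of the columns of $A$, namely $(AB)_{\cdot,j} = \sum_{k=1}^m B_{k,j}\, A_{\cdot,k}$. Then I would expand $\det(AB)$ using the fact that the determinant is multilinear in its columns (or rows), applied independently in each column slot. This produces a sum over tuples $(k_1,\dots,k_n)\in[m]^n$ of terms of the form $\Bigl(\prod_{j=1}^n B_{k_j,j}\Bigr) \det\bigl(A_{\cdot,k_1},\dots,A_{\cdot,k_n}\bigr)$.

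Next I would invoke the alternating property of the determinant: any tuple in which two indices coincide contributes zero. So the sum collapses to one over injections $\sigma\colon[n]\to[m]$. Grouping injections by their image $I\subseteq[m]$ with $|I|=n$, each injection corresponds to a permutation $\pi$ of the (ordered) elements of $I$. Pulling the sign of $\pi$ out via the alternating property gives $\det\bigl(A_{\cdot,\sigma(1)},\dots,A_{\cdot,\sigma(n)}\bigr) = \operatorname{sgn}(\pi)\det(A_I)$, and summing $\operatorname{sgn}(\pi)\prod_{j}B_{\sigma(j),j}$ over permutations $\pi$ of $I$ is exactly the Leibniz expansion of $\det(B^*_I)$. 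Assembling these pieces yields the claimed identity.

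The only ``obstacle'' is bookkeeping the signs in the regrouping step; this is routine once one fixes a consistent ordering of each $I$. As an alternative I could instead compute $\det\begin{pmatrix} I_m & -B \\ A & 0 \end{pmatrix}$ in two ways: a Schur-complement row reduction gives $\det(AB)$ on one side, while Laplace expansion along the first $m$ rows gives the sum $\sum_{|I|=n}\det(A_I)\det(B^*_I)$ on the other. Either proof is standard; since the paper already treats this as background, I would keep the write-up brief and prefer the multilinearity argument for its directness.
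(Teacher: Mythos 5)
Your proof is correct: the multilinearity expansion of $\det(AB)$ over column indices, killing tuples with repeated indices by the alternating property, and regrouping injections by their image $I$ so that the inner sum over permutations reassembles as the Leibniz expansion of $\det(B^*_I)$, is the standard and complete argument for Cauchy--Binet; the block-determinant computation you sketch as an alternative is equally valid. There is nothing in the paper to compare against, since the paper states this as a background \emph{Fact} with no proof (it only remarks that it generalizes $\det(AB)=\det(A)\det(B)$ for square matrices), so supplying either of your arguments would be strictly more than the paper does. One small point worth noting: the paper's statement of the Fact has a dimensional slip (it writes $A,B^*\in\R^{m\times n}$, while the way it is used in Claim~\ref{cauchybinet} requires $A\in\R^{n\times m}$ with columns indexed by $[m]$ and $B\in\R^{m\times n}$, so that $A_I$ and $B^*_I$ are the $n\times n$ submatrices selected by $I\subseteq[m]$, $|I|=n$); your expansion $(AB)_{\cdot,j}=\sum_{k=1}^m B_{k,j}A_{\cdot,k}$ implicitly adopts the correct convention, and your grouped inner sum indeed equals $\det(B_{I,\cdot})=\det(B^*_I)$, so your write-up matches the intended statement. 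The sign bookkeeping you flag is handled exactly as you say, by fixing the increasing ordering of each $I$ and absorbing $\operatorname{sgn}(\pi)$ into the Leibniz sum.
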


This generalizes the well-known identity that the determinant of the product of two matrices is the product of the determinants. We can apply this formula:

\begin{claim}
\label{cauchybinet}
$\det\Big(\sum_{j=1}^me^{t_j}u_j\otimes u_j\Big)
= \det(Ae^TA^*)=\sum_{I\subseteq[m],|I|=n}t_I d_I$
\end{claim}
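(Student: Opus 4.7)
The plan is to verify both equalities directly, treating the first as a routine matrix identity and the second as an application of the Cauchy–Binet formula stated just above the claim.

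First I would establish that $\sum_{j=1}^m e^{t_j} u_j \otimes u_j = A e^T A^*$. Writing $A = [u_1, \dots, u_m] \in \R^{n\times m}$ and letting $e^T$ denote the $m\times m$ diagonal matrix with entries $e^{t_1},\dots,e^{t_m}$, the matrix $A e^T$ has $j$-th column equal to $e^{t_j} u_j$. Multiplying on the right by $A^*$ (whose $j$-th row is $u_j^*$) then expresses the product as a sum of outer products $\sum_j e^{t_j} u_j u_j^* = \sum_j e^{t_j} u_j \otimes u_j$. This is an immediate algebraic identity and requires no real work.

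The substantive step is the second equality. Here I set $M = A e^T \in \R^{n\times m}$ and apply the Cauchy–Binet formula to the product $M A^* \in \R^{n\times n}$, which yields
\[
\det(M A^*) = \sum_{I \subseteq [m],\, |I|=n} \det(M_I)\, \det(A^*_I),
\]
where $M_I$ is the $n\times n$ submatrix of $M$ consisting of the columns indexed by $I$, and $A^*_I$ consists of the rows of $A^*$ indexed by $I$. Since $e^T$ is diagonal, extracting the columns of $M = A e^T$ indexed by $I$ gives $M_I = A_I \cdot \mathrm{diag}(e^{t_j})_{j\in I}$, hence $\det(M_I) = \bigl(\prod_{j\in I} e^{t_j}\bigr)\det(A_I) = t_I \det(A_I)$. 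Likewise $\det(A^*_I) = \det(A_I)$, so the summand becomes $t_I \det(A_I)^2$.

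Finally, to identify $\det(A_I)^2$ with $d_I = \det(A_I A_I^*)$, I observe that when $|I|=n$ the matrix $A_I$ is square, so $\det(A_I A_I^*) = \det(A_I)\det(A_I^*) = \det(A_I)^2$. Plugging this into the preceding display gives $\det(A e^T A^*) = \sum_{|I|=n} t_I d_I$, completing the claim. There is no genuine obstacle here; the only thing to be careful about is keeping the roles of rows vs.\ columns and the square-root identity $d_I = \det(A_I)^2$ straight when invoking Cauchy–Binet.
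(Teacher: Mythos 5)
Your proof is correct and follows exactly the route the paper intends: the paper simply states the claim as a direct application of the Cauchy--Binet fact quoted just above it, and your write-up fills in the same computation (factoring the diagonal $e^T$ through the column submatrices and using $d_I = \det(A_I)^2$ for square $A_I$). No issues.
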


We can now show that the mapping 
$\phi_A$ is convex, and hence $\phi_A^*(c)$ is concave (which we asserted in Section~\ref{sec:barthe}):
\begin{lemma}
\label{lem:convex}
The function $\phi_A$ is convex on $\mathbb{R}^m.$
\end{lemma}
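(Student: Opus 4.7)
The plan is to use the Cauchy-Binet identity from Claim~\ref{cauchybinet}, which rewrites
\[
\phi_A(t) = \log\!\Bigl(\sum_{I\subseteq[m],\,|I|=n} d_I\, e^{\langle \mathbf{1}_I, t\rangle}\Bigr),
\]
thereby expressing $\phi_A$ as a log-sum-exp of affine functions of $t$ with nonnegative coefficients $d_I = \det(A_I A_I^*) \ge 0$. Since $A$ has rank $n$, at least one $I$ has $d_I > 0$, so the argument of the logarithm is strictly positive everywhere on $\mathbb{R}^m$ and $\phi_A$ is well-defined and smooth.

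Once $\phi_A$ is in this form, convexity reduces to the standard fact that the log-sum-exp function composed with an affine map is convex. I would prove it directly via H\"older's inequality to avoid pulling in an external citation: for $t, s \in \mathbb{R}^m$ and $\lambda \in [0,1]$, apply the weighted H\"older inequality to the sum
\[
\sum_I d_I\, e^{\langle \mathbf{1}_I,\, \lambda t + (1-\lambda) s\rangle}
= \sum_I \bigl(d_I\, e^{\langle \mathbf{1}_I, t\rangle}\bigr)^{\lambda}\bigl(d_I\, e^{\langle \mathbf{1}_I, s\rangle}\bigr)^{1-\lambda}
\]
to bound it by $\bigl(\sum_I d_I e^{\langle \mathbf{1}_I, t\rangle}\bigr)^\lambda \bigl(\sum_I d_I e^{\langle \mathbf{1}_I, s\rangle}\bigr)^{1-\lambda}$. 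Taking logarithms of both sides yields exactly the convexity inequality
\[
\phi_A(\lambda t + (1-\lambda) s) \le \lambda\, \phi_A(t) + (1-\lambda)\, \phi_A(s).
\]

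There is no real obstacle here: the only mild point is noting that some terms $d_I$ may vanish (when $A_I$ is rank-deficient), but this is harmless because we simply restrict the sum to the indices $I$ with $d_I > 0$ before invoking H\"older, and the rank-$n$ hypothesis on $A$ guarantees the surviving sum is strictly positive. A cleaner but essentially equivalent alternative would be to compute the Hessian of $\phi_A$ directly and recognize it as a covariance matrix of the discrete probability distribution on $\{I : d_I > 0\}$ with weights proportional to $d_I e^{\langle \mathbf{1}_I, t\rangle}$ (and therefore PSD); I would present the H\"older proof as it is the shortest and makes the concavity of $\phi_A^*$ asserted in Section~\ref{sec:barthe} immediate.
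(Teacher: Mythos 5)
Your proposal is correct and follows essentially the same route as the paper: both rewrite $\phi_A$ via the Cauchy--Binet expansion of Claim~\ref{cauchybinet} as a log of a nonnegative combination of exponentials of linear forms, and then apply an interpolation inequality to the resulting sum. The only cosmetic difference is that the paper proves midpoint convexity with Cauchy--Schwarz (the $\lambda=\tfrac12$ case of your weighted H\"older argument), whereas you handle general $\lambda$ directly.
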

\begin{proof}
Let $s,t\in\R^m.$ Then, applying the Chauchy-Schwarz inequality,
\begin{align*}
\phi\left(\frac{s+t}2\right)
& = \log\det(Ae^{(S+T)/2}A^*) = \log\left(\sum \sqrt{s_Id_I}\sqrt{t_Id_I}\right) \\
& \le \log\left(\sqrt{\sum s_I d_I}\sqrt{\sum t_I d_I}\right)  = \log\sqrt{\det(Ae^SA^*)} +\log\sqrt{\det(Ae^TA^*)} = \frac{\phi(s) +\phi(t)}2
\end{align*}
where the second equality uses Claim \ref{cauchybinet}. 
\end{proof}
Our main step is to show that if $c$ is contained in $K_A$ with ``sufficient
slack'', then the optimum is finite and we get a bound on the norm of an
optimizer. To state the condition we need, we will use a slightly
unconventional definition for how to dilate $K_A.$ This simplifies our
arguments (in part because it preserves the ``trivial'' constraints that the
coordinates sum to $n$ and are each in the interval $[0, 1]$):

\begin{definition}
Let $ C K_A$ denote the vectors $c$ whose coordinates sum to $n$ and are each
in the interval $[0, 1]$ and for all nonnegative directions $u$ with $u_{min}
= 0$, $C\max_{v \in K_A} \langle u, v  \rangle \geq \langle u, c \rangle $.
\end{definition}

\begin{lemma}
\label{yes}
Let $\alpha>0$ and suppose $c\in(1-\alpha)K_A.$ 
Then 
\begin{enumerate}
\item $\phi_A^*(c)<\log\frac1D$,
\item $t^*$ with 
$f(t^*)=\phi_A^*(c)$ satisfies
$\|t^*\|_\infty\le \frac 2\alpha
\log\frac1D
$
\end{enumerate}
\end{lemma}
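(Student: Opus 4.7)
The plan is to exploit the Cauchy--Binet identity $\det(Ae^TA^*) = \sum_I t_I d_I$ from Claim~\ref{cauchybinet} together with the slack condition defining $(1-\alpha)K_A$, applied with the test direction $u = t$ itself.

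First I reduce to $t \geq 0$: since $\sum_j c_j = n$ and $\phi_A(t+s\vec{1}) = ns + \phi_A(t)$, the objective $f(t)=\langle t,c\rangle - \phi_A(t)$ is invariant under $t \mapsto t + s\vec{1}$, so WLOG $t \geq 0$ with $\min_j t_j = 0$. For such $t$, Cauchy--Binet yields the pointwise lower bound $\phi_A(t) \geq \log \max_{I \text{ basis}} t_I d_I \geq M(t) + \log D$, where $M(t) \defeq \max_{v \in K_A}\langle t, v\rangle$ is the max-weight basis (using the vertex description $K_A = \conv\{\vec{1}_I\}$). On the other hand, the definition of $(1-\alpha) K_A$ applied to $u = t$ (nonneg with $\min = 0$) yields $\langle t, c \rangle \leq (1-\alpha) M(t)$. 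Combining,
\begin{equation*}
f(t) \;\leq\; -\alpha\, M(t) + \log(1/D). \qquad (\ast)
\end{equation*}
Since $M(t) \geq 0$, taking the supremum gives $\phi_A^*(c) \leq \log(1/D)$, establishing part~1.

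For part~2, Barthe guarantees a maximizer $t^*$; specializing $(\ast)$ at $t = t^*$ gives $\alpha\, M(t^*) \leq \log(1/D) - \phi_A^*(c)$. A trivial lower bound $\phi_A^*(c) \geq -\phi_A(0) = -\log\det(AA^*)$ comes from the test point $t = 0$; together with a rescaling $A \mapsto \lambda A$ that enforces $\det(AA^*) \leq 1/D$ (the optimizer $t^*$ is invariant under this rescaling, while $\phi_A^*(c)$ and $\log(1/D)$ shift by the same constant $-2n\log\lambda$, so $\log(1/D) - \phi_A^*(c)$ is scale-invariant), this yields $\alpha\, M(t^*) \leq 2\log(1/D)$. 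Finally, since $A$ has full rank $n$, every column $u_j$ extends to a basis, so taking a basis $I$ containing the index $j^*$ that realizes $\|t^*\|_\infty$ gives $M(t^*) \geq \sum_{j \in I} t^*_j \geq t^*_{j^*} = \|t^*\|_\infty$, completing the bound.

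The main obstacle is matching the factor of $2$: the crude test point $t = 0$ only delivers $\phi_A^*(c) \geq -\log\det(AA^*)$, not $\phi_A^*(c) \geq \log D$ outright. Careful bookkeeping of the scale-freedom $A \mapsto \lambda A$---under which $\phi_A$ shifts by $2n\log\lambda$, $D$ scales by $\lambda^{2n}$, the optimizer $t^*$ is unchanged, and $\log(1/D) - \phi_A^*(c)$ is invariant---ensures the final bound depends only on $\alpha$ and $\log(1/D)$ in the chosen normalization. An alternative route would be to find a sharper test point than $t = 0$ that produces the factor of $2$ directly, avoiding the rescaling step.
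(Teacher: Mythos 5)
Your treatment of part~1 and of the core inequality is essentially the paper's argument: after the shift normalization $\min_j t_j=0$, Cauchy--Binet (Claim~\ref{cauchybinet}) gives $\phi_A(t)\ge M(t)+\log D$ with $M(t)=\max_{v\in K_A}\langle t,v\rangle$ (the paper realizes $M(t)$ through the pivotal basis $I^*$), and the dilation condition applied in the direction $u=t$ gives $\langle c,t\rangle\le(1-\alpha)M(t)$, whence $f(t)\le\log(1/D)-\alpha M(t)\le\log(1/D)$ and, at an optimizer, $\alpha\|t^*\|_\infty\le\alpha M(t^*)\le\log(1/D)-\phi_A^*(c)$. Up to here you and the paper agree.

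The gap is in the rescaling step you use to upgrade the trivial lower bound $\phi_A^*(c)\ge f(0)=-\log\det(AA^*)$ to the factor-$2$ statement. Under $A\mapsto\lambda A$ every $d_I$ scales by $\lambda^{2n}$, so $\log(1/D)$ shifts by $-2n\log\lambda$ while $\log\det(AA^*)=\log\sum_I d_I$ shifts by $+2n\log\lambda$; consequently the bound your argument actually produces, $\alpha\|t^*\|_\infty\le\log(1/D)+\log\det(AA^*)$, is itself scale-invariant, and no choice of $\lambda$ can convert it into $\frac2\alpha\log(1/D)$ with $D$ computed from the original input. Concretely, if you pick $\lambda$ so that $\det\bigl((\lambda A)(\lambda A)^*\bigr)\le 1/D_\lambda$, then at the critical $\lambda$ one has $2\log(1/D_\lambda)=\log(1/D)+\log\det(AA^*)$: you have only renamed the right-hand side, not improved it (and when $D\det(AA^*)>1$ this forces $\lambda<1$, hence $\log(1/D_\lambda)>\log(1/D)$). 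Since $\|t^*\|_\infty$ is unchanged by scaling while $\frac2\alpha\log(1/D)$ is not, a bound of the lemma's literal form can only hold under some normalization of $A$; the paper's own proof implicitly assumes one, passing from $f(0)$ to $\log(\min_I d_I)$ via the line ``$f(0)=\log\det(AA^*)$'', which carries a sign slip and amounts to assuming $\det(AA^*)\le 1/D$. So your instinct about the obstacle was right, but the proposed fix does not close it: either add such a normalization hypothesis (after which your argument and the paper's coincide), or state the conclusion as $\|t^*\|_\infty\le\frac1\alpha\bigl(\log\frac1D+\log\det(AA^*)\bigr)$, which is what both arguments genuinely deliver and which still suffices, being polynomial in the bit complexity, for the use of this lemma in Theorem~\ref{thm:efficient}.
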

\begin{proof}
From the assumption that $c\in K_A,$ it follows directly that 
$\sum_{i=1}^m c_i = n$ and that $c_i\in[0,1].$
Throughout this proof, let $f(t) = \langle c,t\rangle -\phi_A(t).$ Let $t\in\R^m.$ 
We need to upper bound $f(t).$ 
Note that we may assume that $\min_j t_j=0$ by adding a constant
$a\in\R$ to all coordinates without changing the value of $f(t)$. 
For notational convenience, assume that the coordinates of $t$ are sorted in
decreasing order $t_1\ge t_2\ge\dots\ge t_m=0.$ This is without loss of
generality since we can always apply a permutation to the columns of $A$ and
the coordinates of $t$ without changing the function value.

\begin{claim}\label{claim:lemma}
$f(t) 
\le \log\left(\frac 1D\right) - \alpha \max_{j=1}^m t_j$
\end{claim}
\begin{proof}
Let $I^*\subseteq[m]$ be the set of the $n$ pivotal vectors in
$[u_1,\dots,u_m],$ i.e., the indices of the vectors that are not in the span
of the vectors to the left of them.

By the monotonicity of the logarithm and
Claim \ref{cauchybinet}:
\[
\phi_A(t_1,\dots,t_m)
=\log\left(\sum t_I d_I \right)
\ge 
\log(t_{I^*}d_{I^*})
= \sum_{j\in I^*} t_j + \log(d_{I^*})
\ge \sum_{j\in I^*} t_j + \log\big(\min d_I\big)\mper
\]
Furthermore, we claim that
\begin{equation}\label{majorize}
\sum_{j\in I^*} t_j-\sum_{j=1}^m c_jt_j
\ge 
\alpha \max_j t_j\mper
\end{equation}
Together these two inequalities directly imply that
the statement of the claim. It therefore only remains to prove
(\ref{majorize}). First note that
$I^*$ maximizes $\langle\vec{1}_I,t\rangle=\sum_{j\in I}t_j$ 
among all $I$ such that $d_I\ne0.$ On the other hand, we know that $c\in
(1-\alpha)K_A.$ Hence $\langle c,t\rangle \le 
(1-\alpha) \langle\vec{1}_{I^*},t \rangle$
and this implies
\[
\sum_{j\in I^*} t_j-\sum_{j=1}^m c_jt_j 
\ge \alpha \sum_{j\in I^*} t_j
\ge \alpha t_1
\]
which establishes (\ref{majorize}).
\end{proof}
The previous claim
shows that as any $t_j$ tends to infinity, $f(t)$ tends to zero. Hence
$\phi_A^*(c)<\infty$ and, by the convexity of $\phi_A$, the supremum is
attained meaning that we can find $t^*$ such that $f(t^*)=\phi_A^*(c).$
But $f(t^*)=\phi_A^*(c)\ge f(0) 
=\log\det(AA^*)=\log\Big(\sum_I d_I\Big)\ge\log(\min_I d_I)$
where we used Claim \ref{cauchybinet} in the second inequality.
Combining this inequality with Claim~\ref{claim:lemma}, we conclude
\[
\max_j t_j^* \le \frac2\alpha
\log\Big(\frac1{\min_I d_I}\Big)
\mper
\]
\end{proof}

\subsection{Strict Convexity}
\sectionlabel{strict-convexity}
Here we prove that if a candidate solution $t$
is close to achieving the optimal value then it is also
close to the optimal solution $t^*$.
This is not a vacuous property since if a convex function $f$ is not strictly convex,
being close to the optimal value for the objective function does not imply that a
solution is close to the optimal solution.  

The catch is that our function $f$ is not
strictly convex on all of $\R^m.$ If $t_a$ denotes the vector obtained from $t$
by adding the constant $a$ to all coordinates in $t$, then for every $a,$
$f(t_a)=f(t)$ (where here we use the condition that $\sum_jc_j=n$). Hence,
there are points $t,t'$ at arbitrary distance that satisfy $f(t)=f(t').$
However, we can show that this is the only scenario in which the function
is not strictly convex.

\begin{definition}
Let us say that $s,t\in\Rm$ are $b$-separated if 
$\|(s+a{\bf 1})-t\|_\infty\ge b$ for every $a\in\R.$ 
Here, ${\bf 1}$ denotes
the all ones vectors.
\end{definition}

This definition leads to the next lemma. 
Using this lemma we will later argue that whenever $f(t)$ is very close
to optimal, then $t$ itself cannot be separated from an optimal solution by
much.
\begin{lemma}
\label{lem:strict}
Let $s,t\in\Rm$ be any two $b$-separated points for some $b>0.$ 
Assume all coordinates of $s,t$ are non-negative and that for every $i,j\in[m]$ there exists $S\subseteq[m]$ with
$|S|=n-1$ such that $d_{S\cup\Set{i}} \ne 0$ and $d_{S\cup\Set{j}\ne0}.$ Then,
\begin{equation*}
\phi_A\left(\frac{s+t}2\right)\le\frac{\phi(s)+\phi(t)}2
-b^2\cdot \exp\left(-(n+1)(\|s\|_\infty+\|t\|_\infty)\right)
\frac{\min_I d_I^2}{\det(AA^*)}\mper
\end{equation*}
\end{lemma}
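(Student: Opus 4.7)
The plan is to quantitatively refine the Cauchy--Schwarz step that established convexity of $\phi_A$ in Lemma~\ref{lem:convex}. Setting $a_I := \sqrt{s_I d_I}$ and $b_I := \sqrt{t_I d_I}$ with $I$ ranging over $n$-element subsets of $[m]$, Claim~\ref{cauchybinet} identifies $e^{\phi_A(s)}=\|a\|^2$, $e^{\phi_A(t)}=\|b\|^2$, and $e^{2\phi_A((s+t)/2)}=\langle a,b\rangle$. Lagrange's identity then gives
\[
\|a\|^2\|b\|^2 - \langle a,b\rangle^2 \;=\; \tfrac{1}{2}\sum_{I,J} d_I d_J\bigl(\sqrt{s_I t_J} - \sqrt{s_J t_I}\bigr)^2,
\]
and applying the elementary inequality $\log(x/y)\ge (x-y)/x$ with $x=\|a\|^2\|b\|^2$ and $y=\langle a,b\rangle^2$ converts this directly into a lower bound
\[
\phi_A(s) + \phi_A(t) - 2\phi_A\!\left(\tfrac{s+t}{2}\right) \;\geq\; \frac{\tfrac{1}{2}\sum_{I,J} d_I d_J (\sqrt{s_I t_J} - \sqrt{s_J t_I})^2}{\|a\|^2\|b\|^2}\mper
\]

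The next step is to exhibit a single large summand in the numerator. The $b$-separation hypothesis is equivalent to $\max_i(s_i-t_i)-\min_i(s_i-t_i)\ge 2b$, so I pick indices $i,j\in[m]$ with $(s_i-t_i)-(s_j-t_j)\ge 2b$. The matroid-exchange assumption in the lemma, applied to this pair, yields $S\subseteq[m]\setminus\{i,j\}$ with $|S|=n-1$ and $d_{S\cup\{i\}},d_{S\cup\{j\}}>0$. Setting $I=S\cup\{i\}$ and $J=S\cup\{j\}$, the crucial calculation is $\log s_I+\log t_J-\log s_J-\log t_I=(s_i-s_j)-(t_i-t_j)\ge 2b$, so $\sqrt{s_I t_J}\ge e^{b}\sqrt{s_J t_I}$. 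Combined with the pointwise non-negativity of $s,t$ (which forces $s_J,t_I\ge 1$), this gives $(\sqrt{s_I t_J}-\sqrt{s_J t_I})^2 \ge (e^b-1)^2 s_J t_I \ge b^2$. Accounting for the symmetric ordered pair $(J,I)$, the numerator is bounded below by $b^2(\min_I d_I)^2$.

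For the denominator I use Claim~\ref{cauchybinet} once more together with $s_I\le e^{n\|s\|_\infty}$ to obtain $\|a\|^2=\det(Ae^SA^*)\le e^{n\|s\|_\infty}\det(AA^*)$, and similarly $\|b\|^2\le e^{n\|t\|_\infty}\det(AA^*)$. Plugging everything into the displayed inequality above produces a strict-convexity gap of the desired qualitative shape, namely $b^2$ times an exponential in $\|s\|_\infty+\|t\|_\infty$ times $(\min_I d_I)^2$ divided by a power of $\det(AA^*)$. The main obstacle I anticipate is the constant bookkeeping needed to match the stated exponent $(n+1)$ in $\exp(-(n+1)(\|s\|_\infty+\|t\|_\infty))$ and the first-power denominator $\det(AA^*)$: one factor of $\det(AA^*)$ has to be absorbed into the exponential prefactor via $\det(AA^*)\ge\min_I d_I$, and care is required to verify that the $S$ produced by the exchange hypothesis can be taken disjoint from $\{i,j\}$ (this is precisely why the hypothesis is phrased as ``for every $i,j$''). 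Conceptually, however, all of the strict convexity comes from the one basis-exchange pair identified above, and the $b^2$ dependence is built into $(e^b-1)^2$ near $b=0$.
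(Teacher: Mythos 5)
Your proposal is essentially the paper's own argument: the same Cauchy--Binet vectors $u_I=\sqrt{s_I d_I}$, $v_I=\sqrt{t_I d_I}$, the same quantification of the Cauchy--Schwarz slack via Lagrange's identity, and the same single exchange pair $I=S\cup\{i\}$, $J=S\cup\{j\}$ with $i,j$ realizing the $b$-separation, whose term is bounded below using the non-negativity of $s,t$; your reformulation of $b$-separation and the disjointness of $S$ from $\{i,j\}$ (automatic, since $d_{S\cup\{i\}},d_{S\cup\{j\}}\neq 0$ forces $|S\cup\{i\}|=|S\cup\{j\}|=n$) are both correct, and the line ``$e^{2\phi_A((s+t)/2)}=\langle a,b\rangle$'' is just a typo for $e^{\phi_A((s+t)/2)}=\langle a,b\rangle$, which your displayed inequality already uses. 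On the bookkeeping you flag: your clean derivation yields a gap of order $b^2 e^{-n(\|s\|_\infty+\|t\|_\infty)}\min_I d_I^2/\det(AA^*)^2$, and the proposed absorption via $\det(AA^*)\ge\min_I d_I$ goes the wrong way, so it does not literally produce the stated single power of $\det(AA^*)$; but the paper's own proof is no tighter at exactly this point (it passes from Lagrange's identity, which gives $1-\gamma/(\|u\|^2\|v\|^2)$, to $1-\gamma/(\|u\|\|v\|)$ without justification, and its final constant also misses the stated bound when $\|t\|_\infty$ is small), and the discrepancy --- a power of $\det(AA^*)$ and constant factors --- is immaterial to the lemma's only use in Theorem~\ref{thm:efficient}, where $\delta$ is chosen with $e^{O(Bn)}\det(AA^*)$ slack anyway.
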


\begin{proof}
As we did in the proof of Lemma~\ref{lem:convex}, we will apply the
Cauchy-Schwarz inequality to the vectors $u,v$ indexed by 
$I\subseteq[m],|I|=n$ and defined as
\[
u_I=\sqrt{e^{\sum_{j\in I} s_j} d_I}\mcom\qquad
v_I=\sqrt{e^{\sum_{j\in I} t_j} d_I}\mper
\]
We'd like to determine how much slack we have in this inequality.
Let us therefore lower bound
\[
1-\left(\frac{\langle u,v\rangle}{\|u\|\|v\|}\right)^2
= \frac{\|u\|^2\|v\|^2-\langle u,v\rangle^2}{\|u\|^2\|v\|^2}
= \frac{\frac12\sum_{I\ne J}(u_Iv_J-u_Jv_I)^2}{\|u\|^2\|v\|^2}\mcom
\]
where the last step is Lagrange's identity.
Now write $t_j=s_j+2a_j.$ By the assumption that $s,t$ are $b$-separated we
must have that $\max_{i,j\in[m],i\ne j}\left|a_i-a_j\right|\ge b.$ 
Let $i,j$ be a pair of indices achieving 
the maximum. Without loss of generality assume that $a_j\ge a_i+b.$
Let $S\subseteq[m]$ be a set of size $|S|=n-1$ such that 
$I=S\cup\{i\}$ and $J=K\cup\{j\}$ satisfy $d_I\ne 0$ and $d_J\ne 0.$ 
Such a set~$S$ must exist by our assumption.
Then:
\begin{align*}
(u_Iv_J - u_Jv_I)^2
&=
\left((e^{\frac{s_i+t_{j}}2}-e^{\frac{s_{j}+t_i}2})
e^{\sum_{j\in S}\frac{s_j+t_j}2}\right)^2 d_Id_J \\
&= \left( (e^{a_{j}}-e^{a_i})e^{\frac{s_i+s_{j}}2}e^{\sum_{j\in S}\frac{s_j+t_j}2} \right)^2  d_Id_J  \ge 
 (e^{a_{j}}-e^{a_i})^2 d_Id_J \\
\end{align*}

\vspace{-1.5pc}

\noindent where the inequality follows because $s_i,t_i\ge0$ for all $i$. On the other hand,
$(e^{a_{j}}-e^{a_i})^2 
= (e^b-1)^2e^{2a_i}.$
But $e^x-1\ge x$ and $a_j\ge-\|s\|_\infty.$ Thus:
\[
(u_Iv_J-u_Jv_I)^2\ge\gamma 
\text{ with } \gamma=b^2 \min_I d_I^2\cdot e^{-\|s\|_\infty}\mper
\] 
Therefore:
\begin{equation}\label{eq:slack}
\left(\frac{\langle u,v\rangle}{\|u\|\|v\|}\right)^2
\le 1-\frac\gamma{\|u\|\|v\|}
\mper
\end{equation}
On the other hand
\[
\|u\|=\sqrt{\det(Ae^SA^*)}\le 
e^{n\|s\|_\infty}\det(AA^*)\mcom\qquad
\|v\|=\sqrt{\det(Ae^TA^*)}\le
e^{n\|t\|_\infty}\det(AA^*)\mper
\] 
Taking logarithms on both sides of~(\ref{eq:slack}), we get
\[
\log\left(\frac{\langle u,v\rangle}{\|u\|\|v\|}\right)
\le -\frac12\gamma \frac{e^{-n(\|s\|_\infty+\|t\|_\infty)}}{\det(AA^*)}\mcom
\]
where we used that $\log(1-x)\le-x$ for all $1>x>0.$
\end{proof}

\subsection{An Algorithm}

Our next theorem gives a polynomial time algorithm for computing the radial
isotropic position. The assumptions are slightly stronger than simply asking
that $c\in K_A.$

\begin{theorem}\label{thm:efficient}
Let $\epsilon>0$ and $\alpha>0.$
Let $A=[u_1,\dots,u_m]\in\R^{m\times n}$ with $m\ge n$ and
$\mathrm{rank}(A)=n.$
Further assume that for every $i,j\in[m]$ there exists $S\subseteq[m]$ with
$|S|=n-1$ such that
$d_{S\cup\Set{i}} \ne 0$ and $d_{S\cup\Set{j}\ne0}.$ 
Then, given $A$ and
any point $c\in (1-\alpha)K_A,$ we can compute a
 $n\times n$ matrix $R$ such that
\begin{equation*}\label{eq:approx}
\sum_{j=1}^m
c_j \left(\frac{Ru_j}{\|Ru_j\|}\right)\otimes 
\left(\frac{Ru_j}{\|Ru_j\|}\right)
= \mathrm{Id}_{\Rn} + J 
\mcom
\end{equation*}
where $\|J\|_\infty\le\epsilon.$
The running time of our algorithm is 
polynomial in $1/\gamma,\log(1/\epsilon)$ and $L$ where $L$ is an upper bound
on the bit complexity of the input $A$ and $c.$
\end{theorem}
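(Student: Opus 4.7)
The plan is to apply the Ellipsoid method to Barthe's convex program, namely to minimize the convex function $g(t) = \phi_A(t) - \langle c, t\rangle$ over a suitable bounded box, and then to output $R = U^{-1/2}$ where $U = \sum_{j=1}^m e^{\hat t_j} u_j\otimes u_j$ and $\hat t$ is the approximate minimizer returned by Ellipsoid. Since $g(t+a\vec 1) = g(t)$ (using $\sum_j c_j = n$), we may restrict attention to vectors normalized by $\min_j t_j = 0$; Lemma~\ref{yes} then gives $\|t^*\|_\infty \le B := \tfrac{2}{\alpha}\log(1/D)$ for the true optimum $t^*$, so we run Ellipsoid inside the box $[0,B]^m$. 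Both $g(t)$ and its gradient $\nabla g(t)_j = e^{t_j}\langle u_j, U^{-1}u_j\rangle - c_j$ are computable in time polynomial in the bit complexity $L$ of the input, which is all that Ellipsoid requires.

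After $\poly(m, L, \log(1/\delta))$ iterations the Ellipsoid method returns $\hat t \in [0,B]^m$ with $g(\hat t) - g(t^*) \le \delta$ for any target accuracy $\delta>0$. To convert this value gap into a bound on $\|\hat t - t^*\|_\infty$ (modulo shifts by $\vec 1$) we invoke Lemma~\ref{lem:strict}. Suppose $\hat t$ and $t^*$ are $b$-separated in the sense of that lemma. Combining the trivial bound $g((\hat t+t^*)/2) \ge g(t^*)$ (by optimality of $t^*$) with Lemma~\ref{lem:strict} applied to $s=t^*$, $t=\hat t$, one obtains
\[
g(\hat t) - g(t^*) \;\ge\; 2b^2\cdot\kappa, \qquad \kappa \;:=\; \frac{\min_I d_I^2}{\det(AA^*)}\cdot e^{-(n+1)(B+\|\hat t\|_\infty)}\mper
\]
Hence any Ellipsoid output satisfying $g(\hat t) - g(t^*)\le \delta$ lies within $\|\hat t - t^*\|_\infty \le \sqrt{\delta/(2\kappa)}$ after shifting $\hat t$ so that $\min_j \hat t_j = 0$.

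To finish, recall from the derivation in Section~\ref{sec:barthe} that the exact optimum satisfies $\sum_j c_j\,\tfrac{R^* u_j}{\|R^* u_j\|}\otimes \tfrac{R^* u_j}{\|R^* u_j\|} = \Id_n$ with $R^* = (U^*)^{-1/2}$. Since $|e^{\hat t_j} - e^{t^*_j}|\le e^B b$ for each $j$, the matrix $U$ is within polynomial factor of $b$ from $U^*$; perturbation bounds for the inverse square root transfer this into $\|R - R^*\|$ polynomially small in $b$, and composing with the (locally Lipschitz) unit-normalization map $v\mapsto v/\|v\|$ yields $\|J\|_\infty \le \epsilon$ provided $\delta$ is chosen small enough. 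Concretely, $\log(1/\delta) = \poly(n, L, 1/\alpha, \log(1/\epsilon))$ suffices, matching the claimed running time.

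The main obstacle is carrying the error estimates through with the right quantitative dependence. The prefactor $\kappa$ in the strict-convexity estimate is exponentially small in $B = O(\log(1/D)/\alpha)$, so $b$ shrinks only like $\sqrt{\delta}\cdot e^{\mathrm{poly}}$; this forces $\log(1/\delta)$ to be polynomial in $1/\alpha$, which is exactly why the running time is polynomial in $1/\alpha$ rather than in $\log(1/\alpha)$. A secondary concern is bit complexity: one must verify that $B$ and the Ellipsoid iterates (which involve $U^{-1}$ and the determinant $\det(Ae^TA^*)$) can be computed with poly-bounded precision, and that the final shift normalization $\min_j\hat t_j = 0$ interacts cleanly with the rational approximations used internally by Ellipsoid.
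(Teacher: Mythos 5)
Your proposal is correct and follows essentially the same route as the paper: run the Ellipsoid method over the box given by Lemma~\ref{yes}, use the strict-convexity estimate of Lemma~\ref{lem:strict} to convert the value gap into coordinate-wise closeness to $t^*$ (modulo shifts by $\vec{1}$), and then transfer this to the claimed approximate isotropy via matrix perturbation bounds. The only cosmetic difference is in the last step: the paper builds $R$ from the approximate solution $s$ itself and invokes the approximate optimality conditions of Lemma~\ref{lem:opt}, so the error appears directly as $J=\sum_j \epsilon_j c_j \frac{Ru_j\otimes Ru_j}{\|Ru_j\|^2}$, whereas you compare $R$ to the exact optimizer's $R^*$ through an inverse-square-root perturbation argument---an equivalent bookkeeping choice.
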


\begin{proof}
We will apply the Ellipsoid method as described in~\cite{Nemirovski05} 
(Theorem 4.1.2.) to
solve the optimization problem $\sup_{t\in \Rm} \langle c,t\rangle-\phi_A(t)$
 over the set of all $t\in\Rm$ satisfying $\|t\|_\infty\le B$ where $B$ is the
parameter from Lemma~\ref{yes} and $t_j\ge0$ for all $j\in[m].$
Let $s$ denote the solution computed by
the Ellipsoid method and suppose we have $|f(s)-f(t^*)|\le \delta^2$ 
where 
\[
\delta \le \frac{\epsilon'\cdot \min_I d_I}{e^{O(Bn)}\det(AA^*)}\mcom
\]
and $\epsilon'$ is a sufficiently small quantity that we will bound later.
With $\delta$ chosen this small it follows from Lemma~\ref{lem:strict} 
that $t^*$ and $s$ cannot be $\delta$-separated. (Otherwise 
$\frac{s+t}2$ would give a solution improving the optimum.) Here, we used the
fact that $\sum_{j\in I} s_j\le Bn$ for every $I\subseteq[m],|I|=n$ 
and therefore
\[
e^{\phi(s)}\le e^{\log(e^{Bn}\det(AA^*))}
= e^{Bn}\det(AA^*),
\]
Similarly, we get the same bound for for $\phi(t^*).$
Hence, we conclude that $s$ must be $\delta$-close to an optimal 
solution in each coordinate. 
This implies (using standard perturbation bounds for the inverse
of a matrix) that  the optimality conditions from Lemma~\ref{lem:opt} are
approximately satisfied for~$s$  in the sense that
\[
e^{s_j} 
=\frac{(1+\eps_j)c_j}{\langle u_j,(Ae^SA^*)^{-1}u_j\rangle}\mcom
\]
with $\eps_j\in[-\epsilon',\epsilon'].$
Consider the positive definite matrix
$M=\sum_{j} e^{s_j} u_j\otimes u_j.$ Its inverse square root $R=M^{-1/2}$ 
satisfies
\[
\Id = 
\sum_{j=1}^m c_j \frac{Ru_j\otimes Ru_j}{\|Ru_j\|^2}
+ \sum_{j=1}^m \eps_jc_j \frac{Ru_j\otimes Ru_j}{\|Ru_j\|^2}\mper
\]
Let $J= \sum_{j=1}^m \epsilon_jc_j \frac{Ru_j\otimes Ru_j}{\|Ru_j\|^2}$ denote
the error term above. 
It is not hard to show that for
$\epsilon'=\epsilon/\exp(\poly(L))),$ we have that $\|J'\|_\infty\le\epsilon.$
Since the dependence on $1/\delta$ in the Ellipsoid method is logarithmic, the 
running time remains polynomial in $L,1/\alpha$ and $\log(1/\epsilon).$
\end{proof}

\end{document}